\newcommand{\IDD}{\mathscr{D}}
\newcommand{\IC}{\mathbb{C}}
\newcommand{\IR}{\mathbb{R}}
\newcommand{\IMM}{\mathscr{M}}
\newcommand{\ILL}{\mathscr{L}}
\newcommand{\IHH}{\mathscr{H}}
\newcommand{\IPP}{\mathscr{P}}
\newcommand{\IFF}{\mathscr{F}}
\newcommand{\IN}{\mathbb{N}}
\newcommand*{\longhookrightarrow}%
               {\ensuremath{\lhook\joinrel\relbar\joinrel\rightarrow}}
\newcommand{\pa}{\sslash} 
\newcommand{\Id}{{\rm d}}
\newcommand{\f}{\frac}
\newcommand{\nn}{\nonumber}
\theoremstyle{plain}            
\newtheorem{theorem}{theorem}[section]
\newtheorem{Lemma}[theorem]{Lemma}
\newtheorem{Corollary}[theorem]{Corollary}
\newtheorem{Theorem}[theorem]{Theorem}
\newtheorem{Proposition}[theorem]{Proposition}
\theoremstyle{definition}       
\newtheorem{Definition}[theorem]{Definition}
\newtheorem{Remark}[theorem]{Remark}
\newcommand{\qedpart}[2]{\hfill $\frac {#1}{#2}\blacksquare$}
\DeclareMathOperator{\End}   {End}
\newcommand{\iprod}[3][{}]{\langle{#2},{#3}\rangle_{#1}}  
\newcommand{\norm}[2][{}]{\|{#2}\|_{{#1}}}    
\newcommand{\Bignorm}[2][{}]{\Bigl\|{#2}\Bigr\|_{#1}}     
\renewcommand{\Re}     {\mathrm {Re}\,}
\newcommand{\e}{\mathrm e}
\newcounter{myenumi}
\newenvironment{myenumerate}[1]{
\begin{list}{(\themyenumi) }
  {\renewcommand{\themyenumi}{#1{myenumi}}
    \usecounter{myenumi}
    \setlength{\topsep}{0em}
    \setlength{\itemsep}{0em}
    \setlength{\leftmargin}{0em}
    \setlength{\labelwidth}{0em}
    \setlength{\labelsep}{0em}}
  }
  {
  \end{list}
  }
\newcommand{\itemref}[1]{\eqref{#1}}
\def\@fnsymbol#1{\ifcase#1\or a\or b\or c\or
   d\or e\or f\or g\or h\or
    i\else\@ctrerr\fi}
\begin{document}

 \begin{titlepage}
   \renewcommand{\thefootnote}{\alph{footnote}} 

   \title{Path integrals and the essential self-adjointness of
     differential operators on noncompact manifolds}

   \author{Batu G\"uneysu\footnote{E-Mail: 
            \texttt{gueneysu@math.hu-berlin.de}}\\
     Institut f\"ur Mathematik, Humboldt-Universit\"at zu Berlin\\
     Olaf Post\footnote{Email: \texttt{olaf.post@durham.ac.uk}} %
     \footnote{\emph{On leave from:} Department of Mathematical
       Sciences, Durham
       University, England, UK}\\
     School of Mathematics, Cardiff University, Wales, UK}
\end{titlepage}

\maketitle 
\begin{abstract}
  We consider Schr\"odinger operators on possibly noncompact
  Riemannian manifolds, acting on sections in vector bundles, with
  locally square integrable potentials whose negative part is in the
  underlying Kato class. Using path integral methods, we prove that
  under geodesic completeness these differential operators are
  essentially self-adjoint on $\mathsf{C}^{\infty}_0$, and that the
  corresponding operator closures are semibounded from below. These
  results apply to nonrelativistic Pauli-Dirac operators that describe
  the energy of Hydrogen type atoms on Riemannian $3$-manifolds.
\end{abstract}

%
\section{Introduction}
\label{ab1}
%

A classical result from B.~Simon's seminal paper~\cite{Si0} states
that a Schr\"odinger operator of the form $-\Delta+V$ in the Euclidean
space $\IR^m$, with $V\colon \IR^m \to \IR$ a locally square
integrable potential, is essentially self-adjoint on
$\mathsf{C}^{\infty}_0(\IR^m)$, if the negative part of $V$ is in the
Kato class $\mathcal{K}(\IR^m)$. Note here that this fact is closely
related to quantum physics, in the sense that the Coulomb potential
$V(x)=-1/|x|$ is in the above class. Having in mind that all of the
above data can be defined on any Riemannian manifold, we are
interested in the following question in this paper: 
\begin{quote}
  \emph{To what extent can Simon\rq{}s result be extended to
    Schr\"odinger type operators acting on sections in vector bundles
    over possibly noncompact Riemannian manifolds?}
\end{quote}
Apart from a pure academic interest, this question is also
particularly motivated by the observation that it is possible to
model~\cite{G6, enciso} nonrelativistic atomic Hamiltonians on any
nonparabolic Riemannian $3$-manifold (which have to be
$\mathrm{spin}^{\IC}$, if the electron\rq{}s spin is taken into
account; in particular, the vector-valued case becomes particularly
interesting from this point of view, see Section~\ref{wasser} below).
This abstraction is desirable from the physics point of view, since
one would like to understand deeply which properties of the Euclidean
space $\IR^3$ actually guarantee certain spectral properties of
quantum systems, or other important results such as the stability of
matter~\cite{lieb}. In these situations, the corresponding potential
terms are always locally square integrable, and with some control on
the underlying Riemannian structure, their negative parts are in the
underlying Kato class, so that we basically are in the initial
situation.

Before we can formulate our main result, we have to introduce some
notation:

Let $M$ denote a smooth connected Riemannian manifold without
boundary. The geodesic distance on $M$ will be written as $\Id(x,y)$,
and $\mathrm{K}_r(x)$ will stand for the open geodesic ball with
radius $r$ around $x$, and
\[
   (0,\infty)\times M\times M
   \longrightarrow (0,\infty),\>\>(t,x,y)\longmapsto p(t,x,y)
\]
will stand for the minimal positive heat kernel on $M$.

If $F\rightarrow M$ is a smooth Hermitian vector bundle, then, abusing
the notation in the usual way, $\left|\bullet\right|_x$ stands for
the norm and the operator norm corresponding to $(\bullet,\bullet)_x$
on each (finite-dimensional) fiber $F_x$, and the scalar product and
norm corresponding to the Hilbert space $\Gamma_{\mathsf{L}^2}(M,F)$
will be written as $\left\langle \bullet,\bullet \right\rangle$ and
$\left\|\bullet\right\|$, respectively, that is,
\begin{align}
  \label{adj}
  \left\langle f_1,f_2 \right\rangle
  =\int_M (f_1(x),f_2(x))_x \mathrm{vol}(\Id x),\>\> 
  \left\|f\right\|^2=\int_M \left|f(x)\right|^2_x \mathrm{vol}(\Id x).
\end{align} 
If
$\tilde F \to M$ is a second bundle as above and if
\[
  P \colon \Gamma_{\mathsf{C}^{\infty}_0}(M,F)
  \longrightarrow \Gamma_{\mathsf{C}^{\infty}_0}(M,\tilde{F}) 
\]
is a linear differential operator, then we denote with $P^{\dagger}$
the formal adjoint of $P$ with respect to~\eqref{adj}. In particular,
the Laplace-Beltrami operator on $M$ is given in this sense as
$-\Delta=\Id^{\dagger}\Id$. The symbol $\nabla^{\mathrm{T} M}$ will
denote the Levi-Civita connection, and if nothing else is said, the
(co-)tangent bundle of $M$ will be equipped with the Hermitian
structure corresponding to the underlying Riemannian metric of $M$.
These data will be implicitely complexified, whenever necessary.

Let $E\to M$ be a smooth Hermitian vector bundle, let $\nabla$ be a
Hermitian covariant derivative in $E$ and let $V \colon
M \to \End(E)$ be a potential, that is, $V$ is a measurable
section in $\End(E)$ such that $V(x) \colon E_x\to E_x$ is
self-adjoint for almost every (a.e.) $x\in M$. Furthermore, let
$\mathcal{K}(M)$ denote the class of Kato functions\footnote{see
  Section~\ref{hau} for the definition of $\mathcal{K}(M)$ and for
  criteria for functions to be in $\mathcal{K}(M)$} on $M$. Our main
result reads as follows:

\begin{Theorem}
  \label{dsk} 
  Let $M$ be geodesically complete, let
  $\left|V\right|\in\mathrm{L}^2_{\mathrm{loc}}(M)$ and assume that
  $V$ admits a decomposition $V=V_1-V_2$ into potentials $V_j\geq 0$
  with $\left|V_2\right|\in\mathcal{K}(M)$. Then the operator
  $\nabla^{\dagger}\nabla/2+V$ is essentially self-adjoint on
  $\Gamma_{\mathsf{C}^{\infty}_0}(M,E)$ and its closure is semibounded
  from below.
\end{Theorem}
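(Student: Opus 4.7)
The plan is to build a candidate self-adjoint extension of $H := \nabla^{\dagger}\nabla/2 + V$ on $\Gamma_{\mathsf{C}^{\infty}_0}(M,E)$ by the Feynman-Kac-Itô path-integral formula, and then to verify that the symmetric operator $H|_{\Gamma_{\mathsf{C}^{\infty}_0}(M,E)}$ has no other self-adjoint extension. The existence part uses geodesic completeness (to make Brownian motion on $M$ non-explosive) and the Kato class condition on $V_2$; the uniqueness part uses a bundle-valued Kato inequality together with a distance-function cut-off argument that again hinges on geodesic completeness.

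First I would construct the semigroup. Let $(X_t)_{t \geq 0}$ denote Brownian motion on $M$ (non-explosive by geodesic completeness) and $//_t$ the stochastic parallel transport in $E$ along $X$ determined by $\nabla$. For $f \in \Gamma_{\mathsf{C}^{\infty}_0}(M,E)$ set
\[
   (T_t f)(x) := E_x \bigl[ \IAA^V_t \, //_t^{-1} f(X_t) \bigr],
\]
where $\IAA^V_t \in \End(E_x)$ is the endomorphism process solving $\Id \IAA^V_t / \Id t = - \IAA^V_t \cdot //_t^{-1} V(X_t) //_t$, $\IAA^V_0 = \mathrm{id}$. Khasminskii's lemma applied to $V_2 \in \mathcal{K}(M)$, together with $V_1 \geq 0$, yields $\|T_t\| \leq \e^{Ct}$ on $\Gamma_{\mathsf{L}^2}(M,E)$, while symmetry of $p(t,x,y)$, isometry of $//_t$, and pointwise self-adjointness of $V$ give self-adjointness of $T_t$. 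Hence $T_t = \e^{-tH_{\mathrm{FK}}}$ for a self-adjoint $H_{\mathrm{FK}}$ bounded below by $-C$. An Itô-formula calculation applied to $t \mapsto \IAA^V_t //_t^{-1} f(X_t)$ shows that $H_{\mathrm{FK}} f = \tfrac{1}{2}\nabla^{\dagger}\nabla f + V f$ for $f \in \Gamma_{\mathsf{C}^{\infty}_0}(M,E)$, so $H_{\mathrm{FK}}$ is a semibounded self-adjoint extension of $H$, already proving the semiboundedness assertion.

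What remains is essential self-adjointness, i.e.\ that any $f \in \Gamma_{\mathsf{L}^2}(M,E)$ satisfying $(\nabla^{\dagger}\nabla/2 + V + \lambda) f = 0$ distributionally must vanish, for $\lambda$ sufficiently large. Since $|V| \in \mathsf{L}^2_{\mathrm{loc}}$, elliptic bootstrapping provides enough local Sobolev regularity on $f$ for a Kato inequality with Hermitian connection to apply, giving $(-\Delta/2 + V_1 - V_2 + \lambda) |f| \leq 0$ distributionally; dropping the nonnegative $V_1$-term yields $(-\Delta/2 - V_2 + \lambda) |f| \leq 0$. Using geodesic completeness, pick cut-offs $\phi_n \in \mathsf{C}^{\infty}_0(M)$ with $0 \leq \phi_n \leq 1$, $\phi_n \uparrow 1$, and $\|\nabla \phi_n\|_{\infty} \to 0$ (obtained by smoothing $\max\{0, \min\{1, 2 - \Id(\cdot,x_0)/n\}\}$, noting $|\nabla \Id(\cdot,x_0)| \leq 1$). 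Pairing the distributional inequality with $\phi_n^2 |f|$ and invoking the Kato-class form bound $\langle V_2 g, g \rangle \leq \varepsilon \|\nabla g\|^2 + C_\varepsilon \|g\|^2$ with $g = \phi_n |f|$, then letting $n \to \infty$, produces a coercivity estimate that forces $|f| = 0$ once $\varepsilon < 1/2$ and $\lambda > C_\varepsilon$.

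The main obstacle is the uniqueness step: it must interlock three ingredients — the bundle-valued Kato inequality (which demands sufficient Sobolev regularity on $f$, extracted from $|V| \in \mathsf{L}^2_{\mathrm{loc}}$ via an elliptic bootstrap), the compact exhaustion (which relies on geodesic completeness to yield cut-offs with gradient tending to zero in sup norm), and the small relative form bound for $V_2$ coming from the Kato class. The Feynman-Kac construction of the first steps has the convenient feature of supplying a natural semibounded self-adjoint reference operator $H_{\mathrm{FK}}$ before any regularity analysis is needed, which is why it is the right starting point for subsequently tackling the uniqueness issue.
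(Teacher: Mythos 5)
Your first step (the Feynman--Kac construction of a semibounded self-adjoint extension) is essentially the route the paper takes via Theorem~\ref{T1} and Theorem~\ref{hb3}, and it is sound, apart from one slip: geodesic completeness does \emph{not} imply stochastic completeness, so you cannot assume Brownian motion is non-explosive; you must carry the lifetime $\zeta(x)$ and the indicator $1_{\{t<\zeta(x)\}}$ through the formula, as the paper does. This does not damage the argument, since the Feynman--Kac formula and the Khasminskii bound work with the killed process.

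The genuine gap is in your uniqueness step. To pair the distributional inequality $(-\Delta/2 - v_2 + \lambda)\,|f| \le 0$ with the non-smooth test section $\phi_n^2 |f|$, to integrate by parts, and to apply the Kato-class form bound to $g=\phi_n|f|$, you need at least $|f|\in \Sobsymb^1_{\mathrm{loc}}$ (indeed $\phi_n|f|$ must lie in the form domain of $-\Delta$). Your source of regularity is ``elliptic bootstrapping from $|V|\in\mathsf{L}^2_{\mathrm{loc}}$'', but this bootstrap stalls: a deficiency element only satisfies $f\in\mathsf{L}^2$ and $\nabla^{\dagger}\nabla f = -2(V+\lambda)f\in\mathsf{L}^1_{\mathrm{loc}}$, and $\mathsf{L}^1$-elliptic regularity plus Sobolev embedding returns $f\in\mathsf{L}^q_{\mathrm{loc}}$ only for $q<\dim M/(\dim M-2)$, which for large $\dim M$ is \emph{weaker} than the $\mathsf{L}^2_{\mathrm{loc}}$ information you started with; the iteration therefore never reaches $\Sobsymb^1_{\mathrm{loc}}$, let alone local boundedness. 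This is exactly the obstruction the paper flags (see the Remark following~\eqref{sdb}: Sobolev embeddings fail to give the needed smoothing for $\dim M>3$, which is ``the main motivation for the introduction of path integral techniques''), and also why Remark~\ref{beme}(i) stresses that Kato potentials need not satisfy the relative bound~\eqref{brav} required by the PDE approach of~\cite{Br}. The paper circumvents this by never touching arbitrary elements of $\mathsf{D}(H_V^{\ast})$: it uses the path integral to show $\e^{-tH_V}$ maps $\mathsf{L}^2$ into $\mathsf{L}^{\infty}_{\mathrm{loc}}$, uses a Davies--Gaffney/finite-propagation-speed argument plus Chernoff's lemma to produce a core of compactly supported, locally bounded vectors (for which $Vf\in\mathsf{L}^2$ and hence $\nabla^{\dagger}\nabla f\in\mathsf{L}^2$), and only then applies Friedrichs mollifiers. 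Your Kato-inequality argument would go through if, say, $V_2$ were form-bounded relative to $-\Delta$ with the local bound~\eqref{brav}, or in low dimension where the bootstrap closes, but as written it does not prove the theorem in the stated generality.
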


Note that the decomposition $V=V_1-V_2$ into nonnegative potentials
need not be the canonic one given by $V=V^+-V^-$, which can be defined
through the fiberwise spectral calculus of $E$.

Before we explain the strategy of the proof of Theorem~\ref{dsk}, some
remarks are in order:
\begin{Remark}
  \label{beme}
  \begin{myenumerate}{\alph}
  
  \item Theorem~\ref{dsk} is disjoint from the various results on
    essential self-adjointness for operators of the form
    $\nabla^{\dagger}\nabla/2+V$ that have been obtained in~\cite{Br}.
    The point here is that, in general, Kato potentials need not
    satisfy the inequality~(2.2) from~\cite{Br}, i.e., for every
    compact $K\subset M$ there are numbers $0<a_K<1$, $b_K>0$ such
    that
    \begin{align}
      \Bigl(\int_K \left|V_2(x)\right|^2_x \left|u(x)\right|^2
         \mathrm{vol}(\Id x)
      \Bigr)^{1/2}
       \leq  a_K\left\|\Delta u\right\|+b_K \left\|u\right\|
       \label{brav}
    \end{align}
    for any $u\in \mathsf{C}^{\infty}_0(M)$. However, it should be noted
    that the main strength of the results of~\cite{Br} lies in the
    fact that the authors have considered \emph{arbitrary} first order
    elliptic differential operators instead of $\nabla$. It would
    certainly be an interesting problem to see to what extent our
    probabilistic techniques below can be extended to cover the latter
    situation, which has first been considered in~\cite{lesch}.

  \item Of course, taking $E=M\times \IC$ and
    $\nabla=\Id+\mathrm{i}\beta$ with $\beta\in\Omega^1_{\IR}(M)$, we
    can deal with \emph{smooth} magnetic potentials within our
    framework. In this scalar situation, the analogue of
    Theorem~\ref{dsk} can be easily deduced from (a slight variation
    of) Theorem 1 in~\cite{grumm}, where the authors can even allow
    magnetic potentials with possibly strong local singularities. We
    refer the reader to~\cite{hinz} for the scalar situation in
    Euclidean space.
  \end{myenumerate}
\end{Remark}

Let us now explain the strategy (which is partially motivated
by~\cite{Si0} and~\cite{grumm}) of the proof of Theorem~\ref{dsk},
which is given in full detail in the following Section~\ref{hau}. To
this end, we assume for the rest of this section that $V$ is as in
Theorem~\ref{dsk}. Then by the main result of~\cite{G1}, it is always
possible to define the form sum $H_V$ corresponding to the Friedrichs
realization of $\nabla^{\dagger}\nabla/2$ and $V$ \emph{without any
  additional assumptions on $M$} (see Theorem~\ref{T1} below). The
main advantage of this observation is that, unlike in usual
essential-self-adjointness proofs, instead of directly proving that
$\nabla^{\dagger}\nabla/2+V$ is essentially self-adjoint on
$\Gamma_{\mathsf{C}^{\infty}_0}(M,E)$, we will prove that the latter
space is an operator core for $H_V$ (this is the content of
Theorem~\ref{T3}; Theorem~\ref{dsk} itself follows directly from the
latter result, which is summarized in Corollary~\ref{ende}). In
particular, we will use the full spectral calculus given by $H_V$.

Having said this, the first step in the proof of this operator core
property will be to deduce the following smoothing property (see
Proposition~\ref{T4} below):
\begin{align}
  \text{For any $t>0$ one has }\>\> 
  \e^{-t H_V}\Big[\Gamma_{\mathsf{L}^{2}}(M,E)\Big]
  \subset \Gamma_{\mathsf{L}^{\infty}_{\mathrm{loc}}}(M,E). \label{sdb}
\end{align} 
This result will be derived from the path integral formula
\begin{align}
  \e^{-t H_V }f(x)= \mathbb{E}\left[1_{\{t<\zeta(x)\}}
    \mathscr{V}^{x}_t \pa_t^{x,-1} f(B_t(x))\right],
  \label{ghgh}
\end{align}
where $B(x)$ is a Brownian motion starting in $x$ with lifetime
$\zeta(x)$, where
\[
\pa^x_t:E_x\longrightarrow  E_{B_t(x)} 
\]
is the corresponding stochastic parallel transport with respect to
$\nabla$, $\pa^{x,-1}_t=\pa^{x,*}_t$ its inverse, and where
\[
\mathscr{V}^{x}_t:E_x\longrightarrow  E_x
\]
is the path ordered exponential\footnote{Here, $t \Delta_k =\{0\leq
  s_1\leq \dots\leq s_k\leq t\}\subset\IR^k$ denotes the $t$-scaled
  $k$-simplex for any $k\in\IN$, $t\geq 0$.}
\begin{align}
  &\mathscr{V}^{x}_t-\mathbf{1}\nn\\
  &=\sum^{\infty}_{k=1}(-1)^k&\int_{t \Delta_k} \pa^{x,-1}_{s_1}
  V(B_{s_1}(x))\pa^{x}_{s_1}\cdots \pa^{x,-1}_{s_k}
  V(B_{s_k}(x))\pa^{x}_{s_k} \Id s_1\dots \Id s_k \label{pui}
\end{align}
(details on these processes and on formula~\eqref{ghgh}, which is one
of the main results of~\cite{G2}, are included in the following
section). Again,~\eqref{sdb} and~\eqref{ghgh} are valid \emph{without
  any additional assumptions on $M$.}

\begin{Remark} 
  Note that it is not possible to deduce~\eqref{sdb} directly by
  Sobolev embedding theorems for $\dim M >3$, which is the main
  motivation for the introduction of path integral techniques in this
  context.
\end{Remark}

In a next step, we will use finite speed propagation methods to deduce
the following result:
\begin{align}
  & \text{The set}\>\>\mathsf{D}(H_V)\cap \left.
    \Big\{f\right|\text{$f$ has a compact support}\Big\}\nn\\
  &\text{ is an operator core for $H_V$, if $M$ is geodesically
    complete.} \label{che}
\end{align}

To be precise, we will actually prove a Davies-Gaffney inequality (see
Proposition~\ref{fps}) for approximations of $H_V$ and use the fact
that this inequality always implies (is in fact equivalent) to finite
speed of propagation by the results of~\cite{coul}. Then one can use a
variant of Chernoff\rq s theorem (see Lemma~\ref{chernoff1}) to
deduce~\eqref{che}. The fact that we use finite speed propagation
methods in this context has been particularly motivated by the scalar
situation that has been considered in~\cite{grumm}, where the authors
apply this method in a similar way. As has been noted in~\cite{grumm},
this technique avoids the usage of second order cut-off functions,
which do not seem to be available without additional control on the
underlying Riemannian structure.

As a next step one can combine~\eqref{che} with~\eqref{sdb} to deduce
the following fact:
\begin{align}
  & \text{The set}\>\>\mathsf{D}(H_V)\cap 
        \Gamma_{\mathsf{L}^{\infty}_{\mathrm{loc}}}(M,E)\cap \left.
  \Big\{f\right|\text{$f$ has a compact support}\Big\}
  \label{sd1}\\
  \nonumber
  &\text{ is an operator core for $H_V$, if $M$ is geodesically complete.}
\end{align}

Then, we shall use the self-adjointness of $H_V$ to deduce that the
elements $f$ of the set~\eqref{sd1} satisfy $\nabla^{\dagger}\nabla
f\in \Gamma_{\mathsf{L}^{2}}(M,E)$.  Finally, if $M$ is geodesically
complete we can use a (local) result on Friedrichs mollifiers to prove
that $\Gamma_{\mathsf{C}^{\infty}_0}(M,E)$ is an operator core for
$H_V$, by showing that $\Gamma_{\mathsf{C}^{\infty}_0}(M,E)$ is dense
in~\eqref{sd1} with respect to the graph norm corresponding to $H_V$.

This paper is organized as follows: In Section~\ref{bew}, we first
recall some facts about Kato potentials. The rest of Section~\ref{bew}
is completely devoted to the proof of Theorem~\ref{dsk}. In
Section~\ref{wasser}, we apply Theorem~\ref{dsk} in the context of
Hydrogen type problems on Riemannian $3$-manifolds, which was
originally the main motivation for this paper. It seems as if this result has not been stated yet in this form in the literature even for the Euclidean $\IR^3$ (though it should be known in this case). Finally, in the
appendix, we have included a fact about Friedrichs mollifiers, an
abstract variant of Chernoff\rq{}s finite speed of propagation theorem
on vector bundles, and some facts about path ordered exponentials that
we will need in our probabilistic considerations.

%
\section{ Kato potentials and the proof of
  Theorem~\ref{dsk}}\label{bew}

%

Let us first clarify that in this section,
\begin{quote}
  \emph{$M$ will always be a smooth connected Riemannian manifold
    without boundary, $E\to M$ a smooth Hermitian vector bundle,
    $\nabla$ a Hermitian covariant derivative in $E$, and $V \colon
    M\to\End(E)$ a potential.}
\end{quote}
By the usual abuse of notation, we will denote the quadratic form
corresponding to a symmetric sequilinear form in some Hilbert space
with the same symbol. The symbol $H_0$ stands for the Friedrichs
realization of $\nabla^{\dagger}\nabla/2$, that is, $H_0$ is the
nonnegative self-adjoint operator in $\Gamma_{\mathsf{L}^2}(M,E)$
which corresponds to the closure $q_{H_0}$ of the quadratic form given
by the symmetric nonnegative operator $\nabla^{\dagger}\nabla/2$,
defined initially on $\Gamma_{\mathsf{C}^{\infty}_0}(M,E)$. Note the
well-known:

\begin{Remark}
  If $M$ is geodesically complete, then one has
  \begin{align}
    &\mathsf{D}(q_{H_0})=\left.\Big\{f\right|f\in\Gamma_{\mathsf{L}^{2}}(M,E),
\nabla f\in \Gamma_{\mathsf{L}^{2}}(M,E\otimes
\mathrm{T}^*M)\Big\},\label{f1}\\
    &q_{H_0}(f,h)= \f{1}{2}\int_M\left( \nabla f(x),\nabla
h(x)\right)_x\mathrm{vol}(\Id x),\nn
  \end{align}
  and $\Gamma_{\mathsf{C}^{\infty}_0}(M,E)$ is an operator core for
  $H_0$, and one has
  \begin{align}
    \mathsf{D}(H_0)
    =\left.\Big\{f\right|f,\nabla^{\dagger}\nabla
f\in\Gamma_{\mathsf{L}^{2}}(M,E)\Big\},\>H_0f
    =\f{1}{2}\nabla^{\dagger}\nabla f.
  \end{align}
\end{Remark}

Next, we remark that $V$ defines a quadratic form in
$\Gamma_{\mathsf{L}^2}(M,E)$ by setting
\begin{align}
  &\mathsf{D}(q_{V})
  =\left.\Big\{f\right|f\in \Gamma_{\mathsf{L}^2}(M,E),\> \left( Vf,f\right)\in
\mathsf{L}^1(M)\Big\},\nn\\
  &q_{V}(f)= \int_M \left( V(x)f(x),f(x)\right)_x\mathrm{vol}(\Id x).
\end{align}

We will often require a global Kato assumption on some negative part
of $V$. Before recalling some facts on Kato functions, let us first
introduce some notation: Let $\IMM:=(\Omega,\IFF,\IFF_*,\mathbb{P})$
be a filtered probability space which satisfies the usual assumptions.
We assume that $\IMM$ is chosen in a way such that $\IMM$ carries an
appropriate family of Brownian motions
\[
B(x)\colon [0,\zeta(x))\times \Omega \longrightarrow M,\>x\in M,
\]
where $\zeta(x):\Omega\to [0,\infty]$ is the lifetime of $B(x)$. We
will freely use the fact
\begin{align}
  \mathbb{P}\{B_t(x)\in N, t<\zeta(x)\}=\int_N p(t,x,y)\mathrm{vol}(\Id y)\>\>\text{
for any measurable $N\subset M$}\nn
\end{align}
in the following.

Now a measurable function $w:M\to\IC$ is said to be in the \emph{Kato
  class} $\mathcal{K}(M)$ of $M$, if
\begin{align}
&\lim_{t\to 0+}\sup_{x\in M} \mathbb{E}\left[\int^t_0 1_{\{s<\zeta(x)\}}
\left|w(B_s(x))\right|\Id s \right]= 0,\>\>\text{ which is equivalent to} \label{ka}\\
&\lim_{t\to 0+}\sup_{x\in M} \int^t_0 \int_M p(s,x,y) \left|w(y)\right|
\mathrm{vol}(\Id y) \Id s= 0.
\end{align}
The \emph{local Kato class} $\mathcal{K}_{\mathrm{loc}}(M)$ is defined
in the obvious way,
\[
\mathcal{K}_{\mathrm{loc}}(M):=\left.\Big\{w\right|1_Kw \in\mathcal{K}(M)\>\text{
for all compact $K\subset M$} \Big\}\supset \mathcal{K}(M),
\]
and generally, $\mathcal{K}_{\mathrm{loc}}(M)$ may depend on the
Riemannian structure of $M$.

For future reference, we note:
\begin{Lemma}
  \label{dsaa}
  \begin{myenumerate}{\alph}
  \item
    \label{dsaa.a}
    One has $\mathcal{K}(M)\subset \mathsf{L}^1_{\mathrm{loc}}(M)$ and
    $\mathsf{L}^{\infty}(M)\subset \mathcal{K}(M)$.

  \item
    \label{dsaa.b}
    For any $w\in\mathsf{L}^1_{\mathrm{loc}}(M)$ and a.e.  $x\in M$
    one has
    \begin{align}
      \mathbb{P}\Big\{w(B_{\bullet}(x))\in 
       \mathsf{L}^1_{\mathrm{loc}}[0,\zeta(x))\Big\}=1.
      \label{qtm0}
    \end{align}

  \item
    \label{dsaa.c}
    For any $w\in \mathcal{K}_{\mathrm{loc}}(M)$ and all $x\in M$ one
    has
    \begin{align}
      \mathbb{P}\Big\{w(B_{\bullet}(x))\in
        \mathsf{L}^1_{\mathrm{loc}}[0,\zeta(x))\Big\}=1.
      \nn
    \end{align}

    \item
    \label{dsaa.d}
    For any $w \in\mathcal{K}(M)$, $t\geq 0$, one has
    \begin{align}
      \sup_{x\in M}\mathbb{E}\left[1_{\{t<\zeta(x)\}}
        \e^{\int^t_0 |w(B_s(x))|\Id
          s}\right]<\infty.
      \label{qtm}
    \end{align}
  \end{myenumerate}
\end{Lemma}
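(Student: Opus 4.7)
The four items are standard properties of Kato potentials, adapted to a Riemannian manifold via the probabilistic formulation~\eqref{ka} and the heat-kernel identity $\int_Mp(s,x,y)\mathrm{vol}(\Id y)=\mathbb{P}\{s<\zeta(x)\}\leq 1$. For (a), the inclusion $\mathsf{L}^\infty(M)\subset\mathcal{K}(M)$ is immediate from this sub-stochastic bound. For $\mathcal{K}(M)\subset\mathsf{L}^1_{\mathrm{loc}}(M)$, fix a compact $K\subset M$ and any $x_0\in M$; at an intermediate time $s_0\in(0,t_0)$ the kernel $y\mapsto p(s_0,x_0,y)$ is continuous and strictly positive on $M$, hence bounded below on $K$, and joint continuity of $p$ on $(0,\infty)\times M\times M$ then yields $c>0$ with $\int_0^{t_0}p(s,x_0,y)\,\Id s\geq c$ uniformly for $y\in K$. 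Fubini then gives
\[
c\int_K|w|\,\mathrm{vol}\leq\int_0^{t_0}\int_Mp(s,x_0,y)|w(y)|\,\mathrm{vol}(\Id y)\,\Id s<\infty
\]
by the Kato condition evaluated at $x_0$.

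For (d) I would run Khasminskii\rq s lemma. Expanding the exponential as a power series and using $(\int_0^tg\,\Id s)^k/k!=\int_{t\Delta_k}g(s_1)\cdots g(s_k)\,\Id s_1\cdots\Id s_k$ with $g=1_{\{\cdot<\zeta(x)\}}|w(B_\cdot(x))|$, a repeated application of the strong Markov property at the ordered times $s_1<\dots<s_k$ bounds the $k$-th summand by $A(t)^k$, where
\[
A(t):=\sup_{y\in M}\mathbb{E}\Bigl[\int_0^t1_{\{s<\zeta(y)\}}|w(B_s(y))|\,\Id s\Bigr].
\]
By~\eqref{ka} one can fix $t_0>0$ with $A(t_0)\leq 1/2$; summing the resulting geometric series gives~\eqref{qtm} for $t=t_0$. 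For arbitrary $t$ one partitions $[0,t]$ into finitely many intervals of length $\leq t_0$ and inserts the Markov property at each endpoint, factoring the exponential into a product of terms each controlled by the $t_0$-case.

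Parts (b) and (c) follow from a truncation-and-exhaustion argument using compact exhaustions $K_n\uparrow M$. By Fubini and $\int_Mp(s,x,y)\mathrm{vol}(\Id x)\leq 1$,
\[
\int_M\mathbb{E}\Bigl[\int_0^t1_{\{s<\zeta(x)\}}|1_{K_n}(B_s(x))w(B_s(x))|\,\Id s\Bigr]\mathrm{vol}(\Id x)\leq t\int_{K_n}|w|\,\mathrm{vol}<\infty,
\]
so for $x$ outside some null set $N_n$ the inner expectation is finite; set $N:=\bigcup_nN_n$. For $x\notin N$, almost surely and for every $n$ and every rational $t>0$, the integral $\int_0^t1_{\{s<\zeta(x)\}}|1_{K_n}w(B_s(x))|\,\Id s$ is finite, and since any continuous path $[0,T]\to M$ with $T<\zeta(x)$ is contained in some $K_n$, this proves (b). For (c), the hypothesis $w\in\mathcal{K}_{\mathrm{loc}}(M)$ forces $1_{K_n}w\in\mathcal{K}(M)$, and the immediate consequence of (d) that $\sup_y\mathbb{E}[\int_0^t|v(B_s(y))|\,\Id s]<\infty$ for every $v\in\mathcal{K}(M)$ makes the inner expectation above finite for \emph{every} starting point $x$, removing the exceptional null set.

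The principal technical step is (d), where the Markov iteration must respect the stopping at the lifetime $\zeta(x)$; assertions (a)--(c) are then routine applications of Fubini, heat-kernel positivity, and the output of (d).
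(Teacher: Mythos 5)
Your argument is correct. Note, however, that the paper does not actually prove Lemma~\ref{dsaa}: its ``proof'' consists of citing \cite{G1} for part (a) and Propositions~2.4 and~2.5 of \cite{G2} for parts (b)--(d), so what you have written is in effect a self-contained reconstruction of the arguments in those references. The steps you use are the standard ones and they all go through: sub-stochasticity $\int_M p(s,x,y)\,\mathrm{vol}(\Id y)=\mathbb{P}\{s<\zeta(x)\}\leq 1$ gives $\mathsf{L}^{\infty}(M)\subset\mathcal{K}(M)$; strict positivity and joint continuity of the minimal heat kernel give the uniform lower bound $\int_0^{t_0}p(s,x_0,y)\,\Id s\geq c$ for $y$ in a compact $K$, whence $\mathcal{K}(M)\subset\mathsf{L}^1_{\mathrm{loc}}(M)$ by Tonelli; Khasminskii's lemma (simplex expansion, Markov iteration bounding the $k$-th term by $A(t)^k$, then concatenation over subintervals of length $\leq t_0$) gives (d); and the Fubini/compact-exhaustion argument, using $\int_M p(s,x,y)\,\mathrm{vol}(\Id x)\leq 1$ by symmetry of $p$, gives (b), upgraded to \emph{every} $x$ in (c) via the uniform first-moment bound that the Kato condition provides for $1_{K_n}w$. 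Two points are worth making explicit in a final write-up: in (b) the exceptional null set must absorb countably many times $t$ as well as the exhaustion index $n$ (your ``every rational $t$'' suffices, by monotonicity of the integral in $t$, together with the fact that the image of a continuous path on $[0,T]$, $T<\zeta(x)$, lies in some $K_n$); and in (d) the iteration is with respect to the sub-Markovian (killed) transition kernel --- on $\{t<\zeta(x)\}$ all intermediate times $s_i$ lie below the lifetime --- which is precisely what legitimises the bound by $A(t)^k$ and the factorisation over subintervals; you flag this, and it is the only place where the killed process requires care.
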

\begin{proof} Part~\itemref{dsaa.a} is an elementary result which is
  included in~\cite{G1}, and the
  parts~\itemref{dsaa.b}--\itemref{dsaa.d} are included in Prop.~2.4
  and Prop.~2.5 in~\cite{G2}.
\end{proof}
Let us now point out that \cite{G1} that one always has 
\[
\mathsf{L}^{\infty}(M)\subset \mathcal{K}(M) \subset \mathsf{L}^{1}_{\mathrm{loc}}(M),
\]
but with some control on the Riemannian
structure of $M$, one can easily produce a large class of Kato
functions. To this end, we first note the following highly nontrivial
self-improvement result of on-diagonal upper estimates for $p(t,x,y)$,
which will be very useful in the following:

\begin{Theorem}
  \label{si}
  Assume that there is a $C>0$ and a $t_0\in (0,\infty]$ such that
  \begin{equation*}
    \sup_{x\in M} p(t,x,x)
    \leq \f{C}{t^{\dim M/2}}\>\>\text{ for all $0<t\leq t_0$.}
  \end{equation*}
  Then there are $C_1$, $C_2>0$ such that
  \begin{equation*}
    \sup_{x,y\in M} p(t,x,y)
    \leq \f{C_1}{t^{ \dim M/2}}\e^{-\Id(x,y)^2/(C_2 t)}\>\>
    \text{ for all $0<t\leq t_0$.} 
  \end{equation*}
\end{Theorem}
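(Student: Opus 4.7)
The plan is to apply the Davies--Grigoryan self-improvement method in two stages. In the first stage one upgrades the on-diagonal bound to a uniform off-diagonal bound: by symmetry of $p$, the semigroup property and Cauchy--Schwarz,
\[
p(t,x,y) = \int_M p(t/2,x,z)\,p(t/2,y,z)\,\mathrm{vol}(\Id z) \leq p(t,x,x)^{1/2}\,p(t,y,y)^{1/2},
\]
so the hypothesis yields $\sup_{x,y\in M}p(t,x,y) \leq C/t^{\dim M/2}$ for all $0<t\leq t_0$.

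In the second stage one inserts the Gaussian weight via Davies's conjugation trick. Fix a bounded Lipschitz function $\psi\colon M\to\IR$ with $|\nabla\psi|_\infty\leq 1$ and a real parameter $\lambda>0$, and consider the conjugated semigroup $T^{\psi,\lambda}_t \deff \e^{\lambda\psi}\e^{t\Delta/2}\e^{-\lambda\psi}$, whose integral kernel is $\e^{\lambda(\psi(x)-\psi(y))}p(t,x,y)$. Differentiating $\|T^{\psi,\lambda}_t f\|^2$ in $t$ and integrating by parts — so that the first-order drift term produced by the conjugation pairs with the $\Delta\psi$ contribution to cancel — one obtains the Davies $L^2$-estimate
\[
\|T^{\psi,\lambda}_t\|_{2\to 2} \leq \e^{\lambda^2 t/2},
\]
whose constant depends only on $|\nabla\psi|_\infty$ and not on $\|\psi\|_\infty$. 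Writing $T^{\psi,\lambda}_t = T^{\psi,\lambda}_{t/3}T^{\psi,\lambda}_{t/3}T^{\psi,\lambda}_{t/3}$, bounding the middle factor in $L^2\to L^2$ by the Davies estimate and the outer factors in $L^1\to L^2$ and $L^2\to L^\infty$ by the uniform on-diagonal bound from the first stage (the $L^1\to L^2$ bound being dual to $L^2\to L^\infty$ by self-adjointness of $P_t$), one arrives at the pointwise kernel estimate
\[
p(t,x,y) \leq \f{C'}{t^{\dim M/2}}\,\e^{\lambda^2 t/2}\,\e^{\lambda(\psi(y)-\psi(x))}.
\]

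To finish, for fixed $x_0,y_0\in M$ I would take $\psi$ to be a bounded Lipschitz approximation of $z\mapsto \Id(y_0,z)$, so that $\psi(y_0)-\psi(x_0)\to -\Id(x_0,y_0)$ as the truncation is relaxed, and then optimize the resulting estimate in $\lambda>0$ by setting $\lambda = \Id(x_0,y_0)/t$; this produces the claimed Gaussian factor $\e^{-\Id(x,y)^2/(C_2 t)}$ with a suitable $C_2>0$. The main technical obstacle is making Davies's conjugation argument rigorous on a possibly incomplete manifold: one has to check that multiplication by $\e^{\pm\lambda\psi}$ preserves the form domain of the Friedrichs $-\Delta/2$ (which is straightforward for bounded Lipschitz $\psi$) and that one may pass to the limit when removing the truncation of $\Id(y_0,\cdot)$. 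Since the decisive $L^2$-bound depends only on $|\nabla\psi|_\infty$ and not on $\|\psi\|_\infty$, a sequence of bounded Lipschitz truncations suffices and no geodesic-completeness assumption on $M$ is needed.
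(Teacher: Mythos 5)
Your Stage~1 (Cauchy--Schwarz plus the semigroup identity to pass from on-diagonal to uniform off-diagonal bounds) and your Gaffney/Davies $\mathsf{L}^2$-estimate $\|T^{\psi,\lambda}_t\|_{2\to 2}\leq \e^{\lambda^2 t/2}$ are both correct, and the latter indeed only needs $|\nabla\psi|\leq 1$ and works for the Friedrichs extension on an incomplete manifold. The gap is in the three-fold composition. In the factorization $T^{\psi,\lambda}_t=T^{\psi,\lambda}_{t/3}T^{\psi,\lambda}_{t/3}T^{\psi,\lambda}_{t/3}$ the outer factors are the \emph{conjugated} semigroup, not $P_{t/3}$, so their $\mathsf{L}^2\to\mathsf{L}^\infty$ norm is
\[
\sup_{x}\Bigl(\int_M \e^{2\lambda(\psi(x)-\psi(y))}\,p(t/3,x,y)^2\,\mathrm{vol}(\Id y)\Bigr)^{1/2},
\]
and the unperturbed on-diagonal bound does not control this: the weight is of size $\e^{2\lambda\, \Id(x,y)}$, and with the eventual choice $\lambda=\Id(x_0,y_0)/t$ one needs precisely the Gaussian decay of $p(t/3,x,\cdot)$ that the theorem asserts. (Replacing the outer factors by the unperturbed $P_{t/3}$ is not an option either, since then the product is no longer $T^{\psi,\lambda}_t$.) So as written the argument is circular at its decisive step; this is exactly why the paper calls the result ``highly nontrivial'' and delegates it to Grigor'yan's Theorem~1.1 in~\cite{grq} rather than proving it.

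The known ways to close this gap are all substantial: (i) Davies's original route, which converts the on-diagonal bound into a Nash or logarithmic Sobolev inequality and shows that this functional inequality survives the conjugation, yielding ultracontractivity of the twisted semigroup of the form $\|T^{\psi,\lambda}_t\|_{1\to\infty}\leq C t^{-\dim M/2}\e^{c\lambda^2 t}$ directly (no composition with unperturbed factors); (ii) Grigor'yan's integrated maximum principle, which bounds the weighted integrals $\int_M \e^{\Id(x,y)^2/(Dt)}p(t,x,y)^2\,\mathrm{vol}(\Id y)$ and is the method of the reference the paper actually cites; or (iii) the Coulhon--Sikora argument via complex time and the Phragm\'en--Lindel\"of principle, combining the on-diagonal bound with the Davies--Gaffney estimate (this is the reference~\cite{coul} used elsewhere in the paper). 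Any of these would turn your outline into a proof, but each is the real content of the theorem, and none of them reduces to the $\mathsf{L}^1\to\mathsf{L}^2\to\mathsf{L}^2\to\mathsf{L}^\infty$ splitting you propose.
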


The reader may find a proof of this result in~\cite{grq} (see
Theorem~1.1 therein for a more general result). \\
For any $p\geq 1$ let $\mathsf{L}^p_{\mathrm{u,loc}}(M)$ denote the
space of uniformly locally $p$-integrable functions on $M$, that is, a
measurable function $v \colon M\to\IC$ is in
$\mathsf{L}^p_{\mathrm{u,loc}}(M)$, if and only if
\begin{align}
\sup_{x\in
M} \int_{\mathrm{K}_1(x)}\left|v(y)\right|^p\mathrm{vol}(\Id
y)<\infty. 
\end{align}
Note the simple inclusions
\[
\mathsf{L}^p(M)\subset \mathsf{L}^p_{\mathrm{u,loc}}(M)\subset \mathsf{L}^p_{\mathrm{loc}}(M).
\]
Now one has the following result:

\begin{Proposition}\label{dhj} Let $p$ be such that $p\geq 1$ if $m=1$, and $p>m/2$ if $m\geq 2$.\\
(a) If there is $C>0$ and a $t_0>0$ such that
\begin{equation}
    \sup_{x\in M} p(t,x,x)
    \leq \f{C}{t^{\dim M/2}}\>\>\text{ for all $0<t\leq t_0$,}\label{heatbound}
  \end{equation}
then one has 
\begin{align}
\mathsf{L}^{p}(M)+\mathsf{L}^{\infty}(M)\subset \mathcal{K}(M).\label{incl}
\end{align}
(b) Let $M$ be geodesically complete, and assume that there are constants $C_1,\dots ,C_6, t_0 >0$ such that for all $0<t \leq t_0$, $x,y\in M$, $r>0$ one has 
$$
\mathrm{vol}(\mathrm{K}_r(x))\leq C_1 r^{\dim M} \mathrm{e}^{C_2 r} 
$$
and
\begin{align}
\f{C_3}{t^{ \dim M/2}} \mathrm{e}^{-  C_4 \Id(x,y)^2/t}\leq p(t,x,y)
\leq \f{C_5}{t^{ \dim M/2}}\mathrm{e}^{-  C_6\Id(x,y)^2/t}.\nn
\end{align}
Then one has
\begin{align}
\mathsf{L}^{p}_{\mathrm{u,loc}}(M)+\mathsf{L}^{\infty}(M)\subset \mathcal{K}(M).\label{incl2}
\end{align}
\end{Proposition}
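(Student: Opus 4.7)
The plan is to reduce both parts to the claim that a function in the appropriate $\mathsf{L}^p$-type space lies in $\mathcal{K}(M)$, since $\mathsf{L}^\infty(M)\subset\mathcal{K}(M)$ is already contained in Lemma~\ref{dsaa}\itemref{dsaa.a}. By the equivalent formulation~\eqref{ka} of the Kato condition, both parts come down to estimating
\[
\sup_{x\in M}\int_0^t\!\int_M p(s,x,y)|w(y)|\,\mathrm{vol}(\Id y)\,\Id s
\]
and showing it tends to $0$ as $t\to 0+$.

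For part~(a), I would first invoke Theorem~\ref{si} to upgrade the on-diagonal bound~\eqref{heatbound} to the Gaussian off-diagonal estimate $p(s,x,y)\leq Cs^{-m/2}\mathrm{e}^{-\Id(x,y)^2/(C's)}$; in particular $\|p(s,x,\cdot)\|_\infty\leq Cs^{-m/2}$. H\"older in $y$ with exponents $p,p'$ reduces matters to bounding $\|p(s,x,\cdot)\|_{p'}$. Using $p(s,x,y)^{p'}\leq\|p(s,x,\cdot)\|_\infty^{p'-1}\,p(s,x,y)$ together with $\int_M p(s,x,y)\,\mathrm{vol}(\Id y)\leq 1$ and the identity $(p'-1)/p'=1/p$, one obtains $\|p(s,x,\cdot)\|_{p'}\leq Cs^{-m/(2p)}$. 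The outer $s$-integral $\int_0^t s^{-m/(2p)}\,\Id s$ converges to $0$ as $t\to 0+$ precisely when $p>m/2$, which matches the hypothesis (and includes the case $m=1$, $p\geq 1$).

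For part~(b), since $w$ is only uniformly locally $\mathsf{L}^p$, the Gaussian off-diagonal decay must be used to localize. I would partition $M$ around $x$ via the annuli $A_0=\mathrm{K}_1(x)$, $A_k=\mathrm{K}_{k+1}(x)\setminus\mathrm{K}_k(x)$ for $k\geq 1$. On $A_k$ ($k\geq 1$) the upper Gaussian bound gives $p(s,x,y)\leq Cs^{-m/2}\mathrm{e}^{-C_6 k^2/s}$, and repeating the interpolation argument of (a) on each annulus yields
\[
\|p(s,x,\cdot)\|_{\mathsf{L}^{p'}(A_k)}\leq Cs^{-m/(2p)}\mathrm{e}^{-C_6 k^2/(ps)}.
\]
For the $w$-factor I would bound $\|w\|_{\mathsf{L}^p(A_k)}\leq N_k^{1/p}\|w\|_{\mathsf{L}^p_{\mathrm{u,loc}}}$, where $N_k$ is the number of unit balls needed to cover $A_k$, controlled via the volume bound $\mathrm{vol}(\mathrm{K}_{k+1}(x))\leq C_1(k+1)^m\mathrm{e}^{C_2(k+1)}$ and a Vitali-type covering. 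Summing over $k$ and integrating in $s$ (using monotonicity of $s\mapsto\mathrm{e}^{-C_6 k^2/(ps)}$ on $(0,t]$ to pull $\mathrm{e}^{-C_6 k^2/(pt)}$ out of the $s$-integral for $k\geq 1$) yields an overall estimate of the form $Ct^{1-m/(2p)}\bigl(1+\sum_{k\geq 1}(k+1)^{m/p}\mathrm{e}^{C(k+1)/p}\mathrm{e}^{-C_6 k^2/(pt)}\bigr)$, which vanishes as $t\to 0+$ because the quadratic Gaussian decay in $k$ dominates the linear volume factor uniformly in $x$.

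The most delicate step, I expect, is the covering estimate: to make the Vitali argument uniform in $x$, one needs a lower bound on $\mathrm{vol}(\mathrm{K}_{1/2}(\cdot))$ that is uniform in the center. This is not stated among the hypotheses, but I would extract it from the two-sided Gaussian bounds as follows. The upper Gaussian tail combined with the polynomial-exponential volume bound forces $\int_{M\setminus\mathrm{K}_{k\sqrt{s}}(x)}p(s,x,y)\,\mathrm{vol}(\Id y)\to 0$ as $k\to\infty$ uniformly in $x$ and small $s$, while stochastic completeness (which follows from the two-sided Gaussian bounds) gives $\int_M p(s,x,y)\,\mathrm{vol}(\Id y)=1$; combined with $p(s,x,\cdot)\leq C_5 s^{-m/2}$ this produces $\mathrm{vol}(\mathrm{K}_r(x))\geq c r^m$ uniformly in $x$ for $r$ small enough. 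This non-collapsing estimate is the only step that is not essentially routine.
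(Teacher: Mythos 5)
Your proposal is correct, and it splits naturally into two halves relative to the paper. For part~(a) you are doing essentially what the paper does: the paper invokes Theorem~\ref{si} to get $\sup_{x,y}p(s,x,y)\leq \tilde C s^{-\dim M/2}$ and then cites Proposition~2.8 of~\cite{G1}, whose proof is exactly the H\"older/interpolation computation you write out ($\|p(s,x,\cdot)\|_{p'}\leq Cs^{-m/(2p)}$ via $\int_M p(s,x,y)\,\mathrm{vol}(\Id y)\leq 1$, then integrability of $s^{-m/(2p)}$ near $0$ iff $p>m/2$). For part~(b) you take a genuinely different route: the paper simply verifies the hypotheses of Theorem~3.3 of Kuwae--Takahashi~\cite{kt} (the integral condition $\int_1^\infty r^{m-1}\e^{C_2r}\e^{-C_6r^2}\,\Id r<\infty$ they check is the same Gaussian-beats-exponential-volume-growth convergence that your sum over annuli encodes), whereas you give a self-contained annular decomposition. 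The price of self-containedness is exactly the point you flag: the covering bound $N_k\lesssim (k+2)^m\e^{C_2(k+2)}$ needs a uniform lower bound $\mathrm{vol}(\mathrm{K}_{1/2}(\cdot))\geq c>0$, which is not among the stated hypotheses and must be extracted from the heat kernel bounds; your sketch of this non-collapsing estimate is sound, with the small caveat that stochastic completeness is better justified from geodesic completeness together with the volume growth bound via Grigor'yan's volume test (the two-sided Gaussian bounds are only assumed for $t\leq t_0$, so they do not by themselves give conservation of mass). In the paper's approach this issue is absorbed into the hypotheses of~\cite{kt}, which is presumably why the lower Gaussian bound appears in the statement at all -- your analysis correctly identifies where it is actually used.
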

\begin{proof} a) Indeed, Theorem~\ref{si} implies the existence of a $\tilde{C}>0$
  such that for all $0<t \leq t_0$ one has\footnote{Of course this
    inequality can also be deduced with an elementary argument.}
  \begin{equation*}
    \sup_{x,y\in M} p(t,x,y)\leq \f{\tilde{C}}{t^{ \dim M/2 }}.
  \end{equation*}
  Now we can directly apply Proposition~2.8 in~\cite{G1} (the
  corresponding proof is elementary and essentially only uses
  H\"older\rq{}s inequality).\\
b) We can use Theorem 3.3 from \cite{kt} with $\nu:=m$, $\beta:=2$, $V(r):=C_1 r^m \mathrm{e}^{C_2 r}$, $\Phi_1(s):=C_3\mathrm{e}^{-C_4 s^2}$, $\Phi_2(s):=C_5\mathrm{e}^{-C_6 s^2}$ to deduce the asserted inclusion (keeping $\mathsf{L}^{\infty}(M)\subset \mathcal{K}(M)$ in mind). Indeed, one just has to note that
\begin{align}
\int^{\infty}_1\f{\max(r^m \mathrm{e}^{C_2 r},r^m)\mathrm{e}^{-C_6 r^2}}{r}\Id r=\int^{\infty}_1 \mathrm{e}^{C_2 r}r^{m-1}\mathrm{e}^{-C_6 r^2}\Id r<\infty,
\end{align}
which is obvious.
\end{proof}

\begin{Remark}
  Let us note that (\ref{heatbound}) is satisfied, for example, if $M$ is geodesically
  complete with Ricci curvature bounded from below and a positive
  injectivity radius (see example~\cite{kt}, p.~110). The reader may find these and several other aspects on Kato functions in \cite{G1} and, particularly, in \cite{kt}.
\end{Remark}

The following result is also included in~\cite{G1}. It shows that,
remarkably, one can always define the form sum of $H_0$ and $V$ under
the following very weak assumptions on $V$:

\begin{Theorem}
  \label{T1}
  Let $V$ be such that there is a decomposition $V=V_1-V_2$ into
  potentials $V_j\geq 0$ with
  $\left|V_1\right|\in\mathsf{L}^1_{\mathrm{loc}}(M)$ and
  $\>\left|V_2\right|\in\mathcal{K}(M)$. Then one has
  \begin{equation}
    \mathsf{D}(q_{H_0}+q_{V})
    =\mathsf{D}(q_{H_0})\cap \mathsf{D}(q_{V_1}),
    \label{doy}
  \end{equation}
  and $q_{H_0}+q_{V}$ is a densely defined, closed and semibounded
  from below quadratic form in $\Gamma_{\mathsf{L}^2}(M,E)$.
\end{Theorem}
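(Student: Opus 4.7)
The plan is to apply the KLMN theorem, reducing the statement to two ingredients: (i) the form $q_{H_0}+q_{V_1}$ is closed and non-negative on $\mathsf{D}(q_{H_0})\cap \mathsf{D}(q_{V_1})$; (ii) $q_{V_2}$ is infinitesimally form-bounded relative to $q_{H_0}$. Ingredient (i) is essentially immediate: fibrewise $V_1\geq 0$, so $q_{V_1}$ is the closed non-negative form of the self-adjoint multiplication operator by $V_1$, and the sum of two closed non-negative forms is automatically closed on the intersection of domains. Density of $\mathsf{D}(q_{H_0})\cap \mathsf{D}(q_{V_1})$ follows from $|V_1|\in \mathsf{L}^1_{\mathrm{loc}}(M)$, which guarantees $\Gamma_{\mathsf{C}^{\infty}_0}(M,E)\subset \mathsf{D}(q_{V_1})$.

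The real analytic content is (ii): for every $\varepsilon>0$ one needs $C_\varepsilon>0$ such that
\[
q_{|V_2|}(f)\leq \varepsilon\, q_{H_0}(f)+C_\varepsilon\|f\|^2 \quad \text{for all }f\in \mathsf{D}(q_{H_0}).
\]
My approach would be to first reduce to the scalar case via the Kato-type semigroup domination
\[
|\e^{-tH_0}f|(x)\leq \mathbb{E}\bigl[1_{\{t<\zeta(x)\}}\,|f|(B_t(x))\bigr],
\]
which follows from the standard Kato inequality for Hermitian covariant derivatives combined with the probabilistic representation of the scalar heat semigroup $\e^{t\Delta/2}$. This reduces the form bound to the analogous scalar statement relative to $-\Delta/2$. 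The latter is then obtained by the classical Stollmann--Voigt/Khasminski argument: the Kato hypothesis on $|V_2|$, rewritten via~\eqref{ka} as $\sup_x\int_M p(t,x,y)|V_2|(y)\,\mathrm{vol}(\Id y)\to 0$ as $t\to 0$, yields by a Schur estimate on positive-kernel semigroups that $\||V_2|^{1/2}\e^{-tH_0}|V_2|^{1/2}\|_{\mathrm{op}}\to 0$. Integrating against $\e^{-\lambda t}$ then gives $\||V_2|^{1/2}(H_0+\lambda)^{-1}|V_2|^{1/2}\|_{\mathrm{op}}\to 0$ as $\lambda\to\infty$, and the $C^*$-identity converts this into the desired infinitesimal form bound.

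With (i) and (ii) in hand, the KLMN theorem delivers at once that $q_{H_0}+q_V=(q_{H_0}+q_{V_1})-q_{V_2}$ is closed, semibounded from below, and densely defined (it contains $\Gamma_{\mathsf{C}^{\infty}_0}(M,E)$) on $\mathsf{D}(q_{H_0})\cap \mathsf{D}(q_{V_1})$, which is exactly~\eqref{doy}. The main obstacle, I expect, is the vector-bundle step of (ii): one must secure Kato's inequality for the general Bochner Laplacian $\nabla^{\dagger}\nabla/2$ attached to an arbitrary Hermitian connection, typically by regularising $|f|\rightsquigarrow \sqrt{|f|_x^2+\varepsilon}$ and passing to the limit in a distributional estimate of the form $|\nabla f|\geq|\Id|f||$. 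Once that semigroup domination is available, the Kato-class machinery packaged in~\eqref{ka} and Lemma~\ref{dsaa} supplies the rest in a well-known fashion.
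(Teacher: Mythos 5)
Your proposal is correct and is essentially the argument behind the theorem: the paper itself gives no proof of Theorem~\ref{T1}, deferring entirely to~\cite{G1}, whose method (as its title and the role of Stollmann--Voigt~\cite{peter} in this circle of ideas indicate) is precisely the KLMN scheme you describe --- Kato's inequality/semigroup domination to reduce the form bound for $|V_2|$ to the scalar Laplacian, followed by the Schur-test resolvent estimate for Kato-class functions and the $C^*$-identity. The one slip is the intermediate claim that $\bigl\||V_2|^{1/2}\e^{-tH_0}|V_2|^{1/2}\bigr\|_{\mathrm{op}}\to 0$: this norm actually diverges as $t\to 0+$ when $|V_2|$ is unbounded, and what the Kato condition~\eqref{ka} controls is the time-integrated quantity $\sup_{x}\int_0^t\int_M p(s,x,y)|V_2(y)|\,\mathrm{vol}(\Id y)\,\Id s$, so the Schur test should be applied directly to the symmetric positive resolvent kernel $\int_0^\infty \e^{-\lambda s}p(s,x,y)\,\Id s$ (using the semigroup property to sum the contributions over $[nt_0,(n+1)t_0]$), which yields the desired $\bigl\||V_2|^{1/2}(H_0+\lambda)^{-1}|V_2|^{1/2}\bigr\|_{\mathrm{op}}\to 0$ as $\lambda\to\infty$.
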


In the situation of Theorem~\ref{T1}, the form sum $H_0\dotplus V$
will be denoted with $H_V$, that is, $H_V$ is the self-adjoint
semibounded from below operator corresponding to $q_{H_0}+q_{V}$.
 
\begin{Remark}
  In the situation of Theorem~\ref{T1}, assume that $M$ is
  geodesically complete. Then Proposition~2.14 in~\cite{G1} states
  that $\Gamma_{\mathsf{C}^{\infty}_0}(M,E)$ is a form core for $H_V$.
\end{Remark}

Let us add the following simple observation:

\begin{Lemma}
  \label{hil}
  Let $\left|V\right|\in\mathsf{L}^2_{\mathrm{loc}}(M)$ and assume
  that there is a decomposition $V=V_1-V_2$ into potentials $V_j\geq
  0$ with $\left|V_2\right|\in\mathcal{K}(M)$. Furthermore, let
  $\tilde H_{V,\min}$ denote the operator $\nabla^{\dagger}\nabla
  /2+V$ with domain of definition
  $\Gamma_{\mathsf{C}^{\infty}_0}(M,E)$, and let $H_{V,\min}:=
  \overline{\tilde H_{V,\min}}$. Then one has $H_{V,\min}\subset
  H_V$.
\end{Lemma}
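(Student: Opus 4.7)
The plan is to establish the unclosed inclusion $\tilde{H}_{V,\min}\subset H_V$; since $H_V$ is self-adjoint and therefore closed, taking closures on the left then yields $H_{V,\min}\subset H_V$. Fix an arbitrary $f\in\Gamma_{\mathsf{C}^{\infty}_0}(M,E)$ and set $g:=\tfrac12\nabla^{\dagger}\nabla f+Vf$. The hypothesis $|V|\in\mathsf{L}^{2}_{\mathrm{loc}}(M)$ together with boundedness and compact support of $f$ gives $Vf\in\Gamma_{\mathsf{L}^2}(M,E)$, so $g\in\Gamma_{\mathsf{L}^2}(M,E)$. By the representation theorem defining $H_V$ from the closed semibounded form $q_{H_0}+q_V$, it suffices to verify (i)~$f\in\mathsf{D}(q_{H_0}+q_V)$, and (ii)~$(q_{H_0}+q_V)(f,h)=\langle g,h\rangle$ for every $h$ in the form domain.

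Step~(i) is immediate from Theorem~\ref{T1}, which identifies the form domain with $\mathsf{D}(q_{H_0})\cap\mathsf{D}(q_{V_1})$: smoothness and compact support place $f$ in $\mathsf{D}(q_{H_0})$, and the bound $|V_1|\leq|V|+|V_2|\in\mathsf{L}^1_{\mathrm{loc}}(M)$ (which uses $\mathsf{L}^2_{\mathrm{loc}}\subset\mathsf{L}^1_{\mathrm{loc}}$ and Lemma~\ref{dsaa}\itemref{dsaa.a}) together with $|f|^2$ being bounded of compact support places $f$ in $\mathsf{D}(q_{V_1})$. For step~(ii) I would handle the two summands separately. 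For $q_{H_0}$, approximate $h$ in the $q_{H_0}$-graph norm by a sequence $h_n\in\Gamma_{\mathsf{C}^{\infty}_0}(M,E)$---legal by the very definition of $q_{H_0}$ as the closure of its action on smooth compactly supported sections---then integrate by parts on the smooth level to obtain $q_{H_0}(f,h_n)=\tfrac12\langle\nabla^{\dagger}\nabla f,h_n\rangle$ and pass to the limit, using that $\nabla^{\dagger}\nabla f\in\Gamma_{\mathsf{L}^2}(M,E)$ and $h_n\to h$ in $\Gamma_{\mathsf{L}^2}(M,E)$. For $q_V$, apply polarization to the identity $q_V(w)=\int(Vw,w)$: each mixed cross-term $\int(Vf,h)$ is absolutely integrable by Cauchy--Schwarz since $Vf,h\in\Gamma_{\mathsf{L}^2}(M,E)$, and the diagonal terms $q_V(f),q_V(h)$ are finite by~(i). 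Summing the two pieces produces $(q_{H_0}+q_V)(f,h)=\langle g,h\rangle$.

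The only mildly delicate point I anticipate is the pointwise representation $q_V(h)=\int(Vh,h)$ for $h\in\mathsf{D}(q_{H_0})\cap\mathsf{D}(q_{V_1})$, which is needed to run the polarization argument. This follows by splitting $q_V=q_{V_1}-q_{V_2}$: the first piece is finite by~(i), and the second is finite and pointwise represented because $|V_2|\in\mathcal{K}(M)$ makes $q_{V_2}$ form-bounded by $q_{H_0}$ with relative bound zero---a fact contained in~\cite{G1} and underlying Theorem~\ref{T1}. With this in hand the entire argument goes through without any additional geometric assumption on $M$, consistent with the absence of geodesic completeness in the hypotheses of Lemma~\ref{hil}.
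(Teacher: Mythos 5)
Your proposal is correct and follows essentially the same route as the paper: reduce to the unclosed inclusion via closedness of $H_V$, then verify that for $f\in\Gamma_{\mathsf{C}^{\infty}_0}(M,E)$ and $h$ in the form domain one has $q_{H_V}(f,h)=\iprod{\tfrac12\nabla^{\dagger}\nabla f+Vf}{h}$, so that the representation theorem places $f$ in $\mathsf{D}(H_V)$ with the expected image. The paper states this identity in one line, whereas you supply the supporting details (membership of $f$ in the form domain, the density/integration-by-parts argument for $q_{H_0}$, and the polarization argument for $q_V$ using the form-boundedness of $q_{V_2}$), all of which are accurate.
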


\begin{proof} Since $H_V$ is closed, it is sufficient to prove $\tilde
  H_{V,\min} \subset H_V$. But if
  $f\in\Gamma_{\mathsf{C}^{\infty}_0}(M,E)$, $h \in
  \mathsf{D}(q_{H_V})$, then $f \in \mathsf{D}(q_{H_V})$ and we have
    \begin{equation}
      q_{H_V}(f,h)=\f{1}{2}\iprod{\nabla^{\dagger}\nabla f}{ h} +
      \iprod {Vf} h,\label{opo}
    \end{equation}
    so $f\in \mathsf{D}(H_V)$ and $H_Vf=\f{1}{2}\nabla^{\dagger}\nabla
    f + Vf$. 
\end{proof}

As we have already remarked in the introduction, an essential step in
the proof of Theorem~\ref{dsk} will be to deduce an $\mathsf{L}^2
\leadsto\mathsf{L}^{\infty}_{\mathrm{loc}}$ smoothing property of the
Schr\"odinger semigroup
\[
  (\e^{-t H_V })_{t\geq 0}\subset \ILL(\Gamma_{\mathsf{L}^{2}}(M,E)),
\]
which will be deduced from a path integral formula for $\e^{-t
  H_V}$. In order to formulate the latter formula in our geometric
context, for any $t\geq 0$ the stochastic parallel transport with
respect to $(B(x),\nabla)$ will be written as a pathwise unitary map
\[
  \pa^x_t:E_x\longrightarrow  E_{B_t(x)},\>\>
  \text{ defined in $\{t<\zeta(x)\}\subset \Omega$}. 
\]

Now Theorem~2.11 in~\cite{G2} states the following Feynman-Kac type
path integral formula:

\begin{Theorem}
  \label{hb3}
  In the situation of Theorem~\ref{T1}, for a.e. $x\in M$, there is a
  unique process
  \[
  \mathscr{V}^{x} \colon
  [0,\zeta(x))\times \Omega \longrightarrow  \End(E_x)
  \]
  which satisfies 
  \begin{equation}
    \f{\Id  \mathscr{V}^{x}_{t}}{\Id t}
    = -\mathscr{V}^{x}_{t}\Big( \pa^{x,-1}_t V(B_t(x)) 
    \pa^{x}_t\Big),\>\mathscr{V}^{x}_{0}=\mathbf{1}\label{gg6}
  \end{equation}
  pathwise in the weak sense, and for any $f\in
  \Gamma_{\mathsf{L}^2}(M,E)$, $t\geq 0$, a.e. $x\in M$ one has
  \begin{align}
    \e^{-t H_V }f(x)
    = \mathbb{E}\left[1_{\{t<\zeta(x)\}} \mathscr{V}^{x}_t 
        \pa_t^{x,-1} f(B_t(x))\right].
  \label{ii7}
  \end{align}
\end{Theorem}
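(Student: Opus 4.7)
The plan is to first construct $\mathscr{V}^x$ pathwise from an explicit Dyson series, then establish~\eqref{ii7} for smooth bounded $V$ via Itô calculus, and finally pass to the full Kato case by a truncation-and-limit argument. For the construction step, the hypotheses of Theorem~\ref{T1} give $|V_1|\in\mathsf{L}^1_{\mathrm{loc}}(M)$ and $|V_2|\in\mathcal{K}(M)\subset\mathcal{K}_{\mathrm{loc}}(M)$, so Lemma~\ref{dsaa}\itemref{dsaa.b}--\itemref{dsaa.c} imply that for a.e.\ $x\in M$ the random map $s\mapsto V(B_s(x))$ lies almost surely in $\mathsf{L}^1_{\mathrm{loc}}[0,\zeta(x))$. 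Since $\pa^x_s$ is pathwise unitary, the conjugated process $A^x_s:=\pa^{x,-1}_s V(B_s(x))\pa^x_s$ inherits this integrability, and the linear pathwise ODE~\eqref{gg6} then admits a unique weak solution given by the convergent Dyson series~\eqref{pui}, with operator norm controlled pathwise by $\e^{\int_0^t|V(B_s(x))|_x\,\Id s}$.

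For the second step, I would first assume that $V$ is smooth and bounded, so that $H_V=H_0+V$ is a bounded perturbation and the semigroup $\e^{-tH_V}$ is governed by the Dyson--Phillips expansion. I would apply Itô's formula to the semimartingale $Y_t:=\mathscr{V}^x_t\pa^{x,-1}_t f(B_t(x))$ for $f\in\Gamma_{\mathsf{C}^{\infty}_0}(M,E)$, lifting to the orthonormal frame bundle so that stochastic parallel transport yields the standard identification of the generator of $B(x)$ with $\tfrac{1}{2}(-\Delta)$ on scalars and of $\pa^{x,-1}_\bullet f(B_\bullet(x))$-martingality with the Bochner Laplacian $\tfrac{1}{2}\nabla^{\dagger}\nabla$ on sections. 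The quadratic-variation term then produces $-\tfrac{1}{2}\nabla^{\dagger}\nabla f$ and the drift of $\mathscr{V}^x$ produces $-Vf$; because $f$, $V$, and $\mathscr{V}^x$ are bounded, the local-martingale remainders are genuine martingales, so taking expectations shows that $u(t,x):=\mathbb{E}[1_{\{t<\zeta(x)\}}Y_t]$ solves the abstract Cauchy problem $\partial_t u=-H_V u$ with $u(0,\cdot)=f$. Uniqueness of the semigroup then yields $u(t,\cdot)=\e^{-tH_V}f$.

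For the third step, I would approximate general $V$ by smooth bounded $V^{(n)}:=V_1^{(n)}-V_2^{(n)}$ with compactly supported $V_j^{(n)}\nearrow V_j$ obtained by truncation and Friedrichs mollification. Monotone convergence for the nonnegative quadratic forms $q_{V_1^{(n)}}$, together with the KLMN-type stability implicit in Theorem~\ref{T1} for the Kato negative part $V_2$, gives $H_{V^{(n)}}\to H_V$ in strong resolvent sense and hence $\e^{-tH_{V^{(n)}}}f\to\e^{-tH_V}f$ in $\Gamma_{\mathsf{L}^2}(M,E)$. On the probabilistic side every term of the Dyson series for $\mathscr{V}^{(n),x}$ converges pointwise almost surely to the corresponding term for $\mathscr{V}^x$, and the full series is dominated by $\e^{\int_0^t|V(B_s(x))|_x\,\Id s}$. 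The main obstacle is justifying the passage to the limit under the expectation uniformly enough in~$x$ to identify the a.e.~pointwise limit of the right-hand side of~\eqref{ii7}: this is precisely where the Kato exponential bound~\eqref{qtm} from Lemma~\ref{dsaa}\itemref{dsaa.d} is indispensable, since it supplies a uniformly integrable majorant for dominated convergence. Extracting a subsequence along which $\e^{-tH_{V^{(n)}}}f\to\e^{-tH_V}f$ holds pointwise a.e.\ then completes the identification~\eqref{ii7}.
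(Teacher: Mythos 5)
The paper does not actually prove Theorem~\ref{hb3}: it is imported wholesale as Theorem~2.11 of~\cite{G2}, so there is no internal proof to compare against. Your reconstruction follows the standard route (and essentially the route of~\cite{G2}): pathwise solvability of~\eqref{gg6} via Lemma~\ref{dsaa}\,\itemref{dsaa.b}--\itemref{dsaa.c} and Lemma~\ref{poe}, giving the Dyson series~\eqref{pui}; an It\^o/Feynman--Kac computation on the frame bundle for bounded smooth $V$; and an approximation argument combining strong resolvent convergence with dominated convergence under the expectation. The first two steps are sound in outline.

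There is, however, a genuine gap in your limiting step. You dominate the Dyson series by $\e^{\int_0^t|V(B_s(x))|_x\,\Id s}$ and then invoke~\eqref{qtm} to make this majorant integrable; but \eqref{qtm} requires $w\in\mathcal{K}(M)$, and $|V|$ is \emph{not} in $\mathcal{K}(M)$ under the hypotheses of Theorem~\ref{T1} --- only $|V_2|$ is, while $|V_1|$ is merely locally integrable --- so the crude term-by-term bound $|\mathscr{V}^{x}_t|_x\le\e^{\int_0^t|V(B_s(x))|_x\,\Id s}$ is useless for the uniform integrability you need. The correct majorant is the one-sided bound $|\mathscr{V}^{x}_t|_x\le\e^{\int_0^t v_2(B_s(x))\,\Id s}$ with $v_2(\bullet):=\max\sigma(V_2(\bullet))\le|V_2(\bullet)|\in\mathcal{K}(M)$, obtained from Lemma~\ref{poe}\,(c) applied to $F=-\pa^{x,-1}_{\bullet}V(B_{\bullet}(x))\pa^{x}_{\bullet}$, whose numerical range is bounded above by $v_2(B_{\bullet}(x))$ since $V\ge -v_2\mathbf{1}$; this is exactly the estimate used in the proof of Proposition~\ref{T4}, and it is uniform in the approximation index provided $V_2^{(n)}\le V_2$. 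Relatedly, approximants that are simultaneously smooth, compactly supported and monotone in $n$ cannot be produced by mollification, which destroys monotonicity; the argument of~\cite{G2}, echoed in the proof of Proposition~\ref{fps}\,\itemref{fps.b}, instead uses the two successive fiberwise truncations $\max(-n,V)$ and $\min(V,n)$ --- monotone but neither smooth nor compactly supported --- together with the increasing and decreasing monotone convergence theorems for quadratic forms (the decreasing version requiring the uniform lower bound supplied by the Kato condition) to get $H_{V^{(n)}}\to H_V$ in the strong resolvent sense. With these two repairs your outline becomes a correct proof.
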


\begin{Remark}
  The set of $x$ for which $\mathscr{V}^{x}$ exists is, by definition,
  equal to the set $x$ for which one has~\eqref{qtm0} for $w=|V|$, and
  if $x$ is in this set, then the asserted formula~\eqref{pui} from
  the introduction follows from Lemma~\ref{poe}.
\end{Remark}

We will use~\eqref{ii7} to deduce:

\begin{Proposition}
  \label{T4}
  In the situation of Theorem~\ref{T1}, one
  has
  \begin{align}
    \e^{-t H_V}\Big[\Gamma_{\mathsf{L}^{2}}(M,E)\Big]
    \subset \Gamma_{\mathsf{L}^{\infty}_{\mathrm{loc}}}(M,E)
    \text{ for any $t>0$.}
    \label{ghd}
  \end{align} 
\end{Proposition}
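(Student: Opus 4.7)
The plan is to read off the smoothing property directly from the Feynman--Kac formula~(\ref{ii7}), via a single Cauchy--Schwarz split of the expectation that separates the ``potential factor'' $\mathscr{V}^x_t$ from the ``heat factor'' $f(B_t(x))$.

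The first step will be to derive a pathwise estimate on $|\mathscr{V}^x_t|$ from the ODE~(\ref{gg6}). Writing $A_s := \pa^{x,-1}_s V(B_s(x)) \pa^x_s$, the fiberwise unitarity of parallel transport makes $A_s$ self-adjoint on $E_x$; the decomposition $V=V_1-V_2$ with $V_1\geq 0$ then yields $A_s\geq -|V_2(B_s(x))|\cdot\mathbf{1}$. Passing to adjoints in~(\ref{gg6}) produces a left-multiplication ODE for $(\mathscr{V}^x_s)^*v$ driven by $-A_s$; running Gronwall on $|(\mathscr{V}^x_s)^*v|^2$ and dropping the nonnegative contribution of $A_{1,s}$ delivers
\begin{equation*}
|\mathscr{V}^x_t|\leq \exp\Bigl(\int_0^t |V_2(B_s(x))|\,\Id s\Bigr) \quad\text{on } \{t<\zeta(x)\}.
\end{equation*}

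Next I would apply Cauchy--Schwarz to the expectation in~(\ref{ii7}), using that $\pa^{x,-1}_t$ preserves fiberwise norms, to obtain
\begin{equation*}
|\e^{-tH_V}f(x)|^2\leq \mathbb{E}\bigl[|\mathscr{V}^x_t|^2 1_{\{t<\zeta(x)\}}\bigr]\cdot \mathbb{E}\bigl[|f(B_t(x))|^2_{B_t(x)} 1_{\{t<\zeta(x)\}}\bigr].
\end{equation*}
For the first factor, the pathwise bound above together with Lemma~\ref{dsaa}\itemref{dsaa.d} applied to $w=2|V_2|\in\mathcal{K}(M)$ (the Kato class is closed under scalar multiplication) yields a bound $C(t)<\infty$ that is uniform in $x\in M$. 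The second factor is by definition $\int_M p(t,x,y)|f(y)|^2_y\,\mathrm{vol}(\Id y)$, namely the minimal scalar heat semigroup $T_t$ applied to $|f|^2\in \mathsf{L}^1(M)$. Since $T_t(|f|^2)$ solves the heat equation on $(0,\infty)\times M$ in the distributional sense, parabolic hypoellipticity forces it to be smooth, and in particular locally bounded, in $x$ for each fixed $t>0$. Combining the two factors proves~(\ref{ghd}).

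The hardest step is the first one: producing a bound on $|\mathscr{V}^x_t|$ that is uniform in $x$ once expectations are taken. The nonlinearity and time-ordered structure of~(\ref{gg6}) make a direct pathwise estimate delicate, and it is essential that we discard the $V_1\geq 0$ contribution in the Gronwall step so that only the Kato-controlled negative part $|V_2|$ enters the exponent; only then is Lemma~\ref{dsaa}\itemref{dsaa.d} applicable to deliver a uniform-in-$x$ bound on $\mathbb{E}[|\mathscr{V}^x_t|^2 1_{\{t<\zeta(x)\}}]$.
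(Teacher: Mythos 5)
Your proposal is correct and follows essentially the same route as the paper's proof: the same Cauchy--Schwarz split of the Feynman--Kac expectation, the same pathwise bound $|\mathscr{V}^x_t|\leq\exp(\int_0^t|V_2(B_s(x))|\,\Id s)$ (which the paper obtains by citing Lemma~\ref{poe}~(c) rather than re-running Gronwall), the same use of~\eqref{qtm} on the Kato part, and the same smoothness of $z\mapsto\int_M p(t,z,y)|f(y)|^2_y\,\mathrm{vol}(\Id y)$ for the heat factor.
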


\begin{Remark}
  Note that Lemma~\ref{dsaa}, Theorem~\ref{T1}, Theorem~\ref{hb3}, and
  Proposition~\ref{T4} are all valid \emph{without any further
    assumptions on the Riemannian structure of $M$}.
\end{Remark}

\begin{proof}[Proof of Proposition~\ref{T4}]
  We define scalar potentials
  $v_j:M\to [0,\infty)$, $v:M\to \IR$ by
  \begin{align}
    v_1(\bullet):= \min\sigma(V_1(\bullet)),\>v_2(\bullet):=
    \max\sigma(V_2(\bullet)),\>v(\bullet):=v_1(\bullet)-v_2(\bullet).\nn
  \end{align}
  Let $x$ be such that~\eqref{qtm0} holds for $w=|V_1|$ and $w=
  |V_2|$. Then $\mathscr{V}^{x}$ exists, and $V\geq v\mathbf{1}$,
  Lemma~\ref{poe} and $-v\leq v_2$ imply
  \begin{align}
    \left|\mathscr{V}^{x}_t\right|_x1_{\{t<\zeta(x)\}}&\leq \e^{-\int^t_0
v(B_s(x))\Id s}1_{\{t<\zeta(x)\}}\nn\\
    &\leq \e^{\int^t_0 v_2(B_s(x))\Id
      s}1_{\{t<\zeta(x)\}}\>\>\text{ $\mathbb{P}$-a.s. for any $t\geq
      0$,}\nn
  \end{align}
  so that for any $t>0$ one has
  \begin{align}
    &\left|\mathbb{E}\left[1_{\{t<\zeta(x)\}} \mathscr{V}^{x}_t \pa_t^{x,-1}
f(B_t(x))\right]\right|_x\nn\\
    \leq \> &\mathbb{E}\left[1_{\{t<\zeta(x)\}} \e^{\int^t_0 v_2(B_s(x))\Id s} 
\left| f(B_t(x))\right|_{B_t(x)}\right]\nn\\
    \leq \>&  \sqrt{\mathbb{E}\left[1_{\{t<\zeta(x)\}} \e^{2\int^t_0 v_2(B_s(x))\Id
s} \right] }  \sqrt{\mathbb{E}\left[1_{\{t<\zeta(x)\}} \left|
f(B_t(x))\right|^2_{B_t(x)}\right] }\nn\\
    = \>& \sqrt{\mathbb{E}\left[1_{\{t<\zeta(x)\}}
        \e^{2\int^t_0 v_2(B_s(x))\Id s} \right]}
    \sqrt{\int_{M} \left| f(y)\right|^2_y p(t,x,y)\mathrm{vol}(\Id
      y)}\label{fdg}.
  \end{align}
  Since for any $h\in \mathsf{L}^1(M)$, the function
  \[
  M\longrightarrow \IC, \>\>z\longmapsto \int_{M} h(y)
  p(t,z,y)\mathrm{vol}(\Id y)
  \]
  is in $\mathsf{C}^{\infty}(M)$ (see Theorem 7.19 in~\cite{gre}), we
  can use~\eqref{qtm} with $w= v_2$ to deduce that for any compact
  $K\subset M$ one has
  \[
  \sup_{z\in K}\left(\mathbb{E}\left[1_{\{t<\zeta(z)\}}
      \e^{2\int^t_0 v_2(B_s(z))\Id s} \right]\int_{M} \left|
      f(y)\right|^2_y p(t,z,y)\mathrm{vol}(\Id y)\right)<\infty,
  \]
  so that, in view of~\eqref{fdg}, the assignment
  \[
  x\longmapsto \mathbb{E}\left[1_{\{t<\zeta(x)\}} \mathscr{V}^{x}_t \pa_t^{x,-1}
f(B_t(x))\right]  
  \]
  defines an element of
  $\Gamma_{\mathsf{L}^{\infty}_{\mathrm{loc}}}(M,E)$, and~\eqref{ghd}
  is implied by the path integral formula from Theorem~\ref{hb3}.
\end{proof}

Next, we are going to deduce a finite propagation speed result, which
will be used later on to prove that the compactly supported elements
of $\mathsf{D}(H_V)$ are an operator core for $H_V$ under geodesic
completeness. The essential observation is that finite speed of
propagation is always implied by a Davies-Gaffney type inequality,
through a Paley-Wiener type theorem~\cite{coul}. As we have already
remarked in the introduction, we have borrowed this method
from~\cite{grumm}.

\begin{Proposition}
  \label{fps}
  Let $M$ be geodesically complete.
  \samepage
  \begin{myenumerate}{\alph}
  \item  
    \label{fps.a}
    If $V$ is bounded, then there is a constant $D>0$ such that for
    all open sets $U_1,U_2\subset M$, all $f_1,f_2\in
    \Gamma_{\mathsf{L}^{2}}(M,E)$ with $\mathrm{supp}(f_j)\subset U_j$
    and all $t>0$ one has
    \begin{align}
      \left| \left\langle\e^{-t H_V}f_1,f_2
        \right\rangle\right|\leq \e^{Dt} \e^{-
        \Id(U_1,U_2)^2/(4t)}\norm{f_1}\norm{f_2}.\label{ga}
    \end{align}

  \item 
    \label{fps.b}
    Let $V$ be as in Theorem~\ref{T1} and assume $H_V\geq 0$.  Then
    for any compactly supported $f\in \Gamma_{\mathsf{L}^{2}}(M,E)$
    and any $t>0$, the section $\cos(t\sqrt{H_V})f$ has a compact
    support.
  \end{myenumerate}
\end{Proposition}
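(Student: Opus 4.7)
For Part~\itemref{fps.a}, I plan to run the classical Davies--Gaffney argument with an exponential weight. Fix a bounded Lipschitz function $\phi\colon M\to\IR$ with $|\Id\phi|\leq 1$ almost everywhere, let $\alpha>0$, and set $u(t):=\e^{-tH_V}f_1$ for $f_1\in\Gamma_{\mathsf{L}^2}(M,E)$. Since $V$ is bounded, $\mathsf{D}(q_{H_V})=\mathsf{D}(q_{H_0})$, and this space is preserved by multiplication by $\e^{2\alpha\phi}$, so I may differentiate $h(t):=\bignormsqr{\e^{\alpha\phi}u(t)}$ through the form to obtain
\[
h'(t)=-\Re\iprod{\nabla u(t)}{\nabla(\e^{2\alpha\phi}u(t))}-2\Re\iprod{Vu(t)}{\e^{2\alpha\phi}u(t)}.
\]
Expanding $\nabla(\e^{2\alpha\phi}u)=\e^{2\alpha\phi}\nabla u+2\alpha\e^{2\alpha\phi}(\Id\phi)\otimes u$, using $|\Id\phi|\leq 1$ and Cauchy--Schwarz to absorb the cross term into $\bignormsqr{\e^{\alpha\phi}\nabla u}$, and bounding the potential by $\|V\|_\infty$, I arrive at $h'(t)\leq(\alpha^2+2\|V\|_\infty)h(t)$. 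Gronwall's inequality combined with the choice $\phi(x):=\min(\Id(x,U_1),R)$ for $R\geq\Id(U_1,U_2)$ (so that $\phi\equiv 0$ on $U_1$ and $\phi\geq\Id(U_1,U_2)$ on $U_2$) yields
\[
|\iprod{\e^{-tH_V}f_1}{f_2}|\leq \e^{(\alpha^2/2+\|V\|_\infty)t-\alpha\Id(U_1,U_2)}\norm{f_1}\norm{f_2},
\]
and optimizing over $\alpha>0$ delivers the claimed Gaussian decay with $D:=\|V\|_\infty$ (in fact with the sharper factor $1/2$ in the Gaussian, whence the stated factor $1/4$ follows \emph{a fortiori}).

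For Part~\itemref{fps.b}, the plan is to reduce to Part~\itemref{fps.a} by truncating only the negative part of $V$ and then to invoke the Paley--Wiener type theorem of~\cite{coul}. For each $n\in\IN$, let $V_2^{(n)}$ denote the fiberwise spectral truncation of $V_2$ at level $n$ and set $V^{(n)}:=V_1-V_2^{(n)}$. Then $V^{(n)}$ satisfies the hypotheses of Theorem~\ref{T1} and defines $H_{V^{(n)}}$; since $V_2^{(n)}\leq V_2$ fiberwise, $V^{(n)}\geq V$, and consequently $H_{V^{(n)}}\geq H_V\geq 0$. The same computation as in Part~\itemref{fps.a} works for $V^{(n)}$: the nonnegative term $-2\iprod{V_1u}{\e^{2\alpha\phi}u}\leq 0$ drops out with the correct sign, and only the boundedness of $V_2^{(n)}$ is used, yielding a Davies--Gaffney inequality for $\e^{-tH_{V^{(n)}}}$. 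By~\cite{coul}, this is equivalent to finite propagation
\[
\supp(\cos(t\sqrt{H_{V^{(n)}}})f)\subset\overline{N_t(\supp f)},\qquad N_t(K):=\{x\in M\,|\,\Id(x,K)<t\},
\]
for every compactly supported $f$. Monotone form convergence $q_{H_0}+q_{V_1}-q_{V_2^{(n)}}\downarrow q_{H_0}+q_V$ (the uniform KLMN-type bound for the decreasing negative part being furnished by $|V_2|\in\mathcal{K}(M)$) gives $H_{V^{(n)}}\to H_V$ in strong resolvent sense, and the bounded continuous functional calculus yields $\cos(t\sqrt{H_{V^{(n)}}})f\to\cos(t\sqrt{H_V})f$ in $\Gamma_{\mathsf{L}^2}(M,E)$. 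The support condition passes to the $\mathsf{L}^2$-limit, so $\cos(t\sqrt{H_V})f$ is supported in $\overline{N_t(\supp f)}$, which is compact by the Hopf--Rinow theorem thanks to geodesic completeness.

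The main obstacle will be the convergence step in Part~\itemref{fps.b}: one has to justify strong resolvent convergence $H_{V^{(n)}}\to H_V$ of form sums whose positive part $q_{V_1}$ may be genuinely unbounded while the negative part $q_{V_2^{(n)}}$ converges monotonically to a form-small perturbation. The Kato hypothesis $|V_2|\in\mathcal{K}(M)$ provides exactly the uniform KLMN bound that drives this monotone form-convergence argument.
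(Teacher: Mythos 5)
Your proposal is correct, and Part~\itemref{fps.a} is essentially the paper's argument: the same exponential-weight (Gaffney) computation, Gronwall, and optimization over the weight parameter; the only cosmetic differences are that you run the derivative computation at the form level for general $f_1\in\Gamma_{\mathsf{L}^2}(M,E)$ and absorb the potential via $\norm[\infty]{V}$, whereas the paper first reduces to $V\geq 0$ (getting $D=0$) and works with $f\in\Gamma_{\mathsf{C}^{\infty}_0}(M,E)$ plus density. In Part~\itemref{fps.b} you take a genuinely different, and arguably cleaner, route. The paper truncates $V$ itself in two stages --- first $V_n=\max(-n,V)\downarrow V$ to reduce to potentials bounded from below, then $\min(V,n)\uparrow V$ to reduce to bounded potentials --- so that the Davies--Gaffney inequality is only ever invoked for \emph{bounded} $V$, at the price of two monotone-form-convergence limits. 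You instead truncate only the negative part, $V^{(n)}:=V_1-\min(V_2,n)$, which costs a single (decreasing) monotone form limit but requires the Davies--Gaffney estimate for the semigroup of $H_{V^{(n)}}$, whose positive part $V_1$ is unbounded; this is a genuine strengthening of Part~\itemref{fps.a} as stated, and you are right that it goes through because $q_{V_1}(u,\e^{2\alpha\phi}u)\geq 0$ enters with a favorable sign and only the bounded truncated negative part needs to be estimated (the form-level differentiation $h'(t)=-2\Re\,q(u(t),\e^{2\alpha\phi}u(t))$ is legitimate since $\e^{2\alpha\phi}$ is bounded Lipschitz and hence preserves $\mathsf{D}(q_{H_0})\cap\mathsf{D}(q_{V_1})$). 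Your closing remarks on the convergence step are accurate: the uniform KLMN bound from $|V_2|\in\mathcal{K}(M)$ is what makes the decreasing limit $q_{H_0}+q_{V_1}-q_{\min(V_2,n)}\downarrow q_{H_0}+q_V$ converge in the strong resolvent sense to $H_V$ (this is the same monotone-convergence input the paper cites from the proof of Theorem~2.11 in~\cite{G2}), and the passage of the support condition through the $\mathsf{L}^2$-limit of $\cos(t\sqrt{H_{V^{(n)}}})f$ is unproblematic. The one point worth stating explicitly if you write this up is that the Coulhon--Sikora equivalence is applied, as in the paper, to a Davies--Gaffney bound carrying the harmless prefactor $\e^{D_nt}$ (with $D_n$ depending on $n$), which the Phragm\'en--Lindel\"of argument tolerates since $H_{V^{(n)}}\geq 0$.
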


\begin{proof} 
  \itemref{fps.a}~Under the assumption that $V$ is bounded and
  nonnegative, we are going to prove~\eqref{ga} with $D=0$, which of
  course proves the assertion. To this end, we are going to use the well-known exponential-weight
  method, that goes back to~\cite{gaff} (see also~\cite{coul}): Let $q
  \colon M\to\IR$ be a bounded Lipschitz function with $|\Id q|\leq C$
  a.e. in $M$. For any $f\in \Gamma_{\mathsf{C}^{\infty}_0}(M,E)$,
  Lemma~\ref{hil} and the Sobolev product rule
  \begin{align}
    \nabla (\e^{2q} \e^{-tH_V}f)
    = \Id \e^{2q}\otimes \e^{-tH_V}f +\e^{2q} 
          \nabla \e^{-tH_V}f
  \end{align}
  imply
  \begin{align}    
    &\f{\Id}{\Id t} \norm{\e^{q}  \e^{-tH_V}f}^2\nn\\
    &=-2\Re \left\langle \nabla^{\dagger}\nabla \e^{-tH_V}f,\e^{2q}
\e^{-tH_V}f\right\rangle-2 \left\langle V \e^{-tH_V}f,\e^{2q}
\e^{-tH_V}f\right\rangle\nn\\
    &=  -2\Re\left\langle \e^{q} \nabla \e^{-tH_V}f ,\e^{q}\Id q\otimes \e^{-tH_V}f
\right\rangle-2\norm{\e^{q}\nabla \e^{-tH_V}f }^2\nn\\
    &\ \ -2 \left\langle V \e^{-tH_V}f,\e^{2q} \e^{-tH_V}f\right\rangle.
  \end{align}
  Using Cauchy-Schwarz on the fibers for the first term and $V\geq 0$
  for the last term, the latter expression can be estimated by
  \begin{align}
    \leq & \ 2 \int_M\e^{q(x)} \left| \nabla \e^{-tH_V}f(x) \right|_x \e^{q(x)}
\left|\Id q(x)\right|_x \left|\e^{-tH_V}f(x) \right|_x\mathrm{vol}(\Id
x)\nn\\
    & \ -2\norm{\e^{q}\nabla \e^{-tH_V}f }^2,
  \end{align}
  which, using $XY\leq X^2+Y^2/4$, is 
  \begin{align}
    \leq \f{1}{2} \left\| \e^{q}  \left|\Id q\right| \e^{-tH_V}f\right\|^2\leq
\f{C^2}{2} \left\| \e^{q}  \e^{-tH_V}f \right\|^2.
  \end{align}
  Thus, setting $\mathscr{E}_{f,q}(t):=\norm{\e^{q}
    \e^{-tH_V}f}^2$, putting everything together and using
  Gronwall, we arrive at
  \begin{align}
    \mathscr{E}_{f,q}(t)
    \leq \e^{C^2 t/2} \mathscr{E}_{f,q}(0)\label{gron}.
  \end{align}
  Now let $U_1,U_2$ be disjoint, let $f\in
  \Gamma_{\mathsf{C}^{\infty}_0}(M,E)$ with $\mathrm{supp}(f)\subset
  U_2$, and let $a>0$. Then the function $q:=a\Id(\bullet,U_2)$ is bounded and Lipschitz with $|\Id q| \leq a $ a.e.\ in $M$
  and~\eqref{gron} implies
  \begin{align}
    &\norm{1_{U_1}\e^{-t H_V}f }^2\nn\\
    &\leq  \e^{-a\Id(U_1,U_2)  }  \e^{a^2 t/2}
\mathscr{E}_{f,q}(0)\nn\\
    &= \e^{-a\Id(U_1,U_2)}\e^{a^2 t/2} \int_{U_2} \left|
f(x) \right|^2_x \e^{a\Id(x,U_2)  }\mathrm{vol}(\Id x)\nn\\
    &= \e^{-a\Id(U_1,U_2)
    }\e^{a^2 t/2} \norm{f }^2,
  \end{align}
  so that by choosing $a$ appropriately
  \begin{align}
    \norm{1_{U_1}\e^{-t H_V}f }
    \leq   \e^{ -\Id(U_1,U_2)^2/(4t)} \norm{f},
    \label{swsw}
  \end{align}
  which carries over to $f_2$ by a density argument. Finally, we have
  \begin{align}
    &\left| \left\langle\e^{-t H_V}f_1,f_2
      \right\rangle\right|=\left| \left\langle
        f_1,1_{U_1}\e^{-t H_V}f_2 \right\rangle\right|\leq
    \e^{- \Id(U_1,U_2)^2/(4t)}\norm{f_1}\norm{f_2}
  \end{align}
  by Cauchy-Schwarz and~\eqref{swsw}, and everything is proved.
  
  \itemref{fps.b}~It is sufficient to prove that for any $U_j$, $f_j$
  as in~\itemref{fps.a} and any $0<s<\Id(U_1,U_2)$ one has
  \begin{align}
    \left\langle\cos\left(s\sqrt{H_V}\right)f_1,f_2 \right\rangle =0.\label{ga2}
  \end{align}
  Indeed, the latter implies that if $\mathrm{supp}(f)\subset
  \mathrm{K}_r(x)$ for some $r>0$, $x\in M$, then for any $t>0$ one
  has
  \begin{align}
    \mathrm{supp}\left(\cos\left(t\sqrt{H_V}\right)f\right)\subset
\overline{\mathrm{K}_{r+t}(x)}, 
  \end{align}
  and the latter set is compact by the geodesic completeness of $M$.
  It remains to prove~\eqref{ga2}.

  If $V$ is bounded and $H_V\geq 0$, then~\eqref{ga2} follows directly
  from~\itemref{fps.a}: Indeed, one can use the same arguments as
  those in the proof of theorem 3.4 in~\cite{coul} to see this.
  Essentially, one has to use a variant of the Paley-Wiener theorem,
  which has to be applied to an appropriately rescaled version of the
  analytic function $z\mapsto \left\langle\e^{-z H_V}f_1,f_2
  \right\rangle$, $\Re z>0$.

  Next, we assume that $V$ is locally integrable and bounded from
  below with $H_V\geq 0$. Then putting $V_n:=\min(V,n)$ for $n\in\IN$
  (in the sense of the fiberwise spectral calculus of $E$) we find by
  the above that~\eqref{ga2} is satisfied for $V$ replaced with $V_n$,
  but monotone convergence of quadratic forms (see the proof of
  theorem 2.11 in~\cite{G2}) gives $H_{V_n}\to H_V$ as $n\to\infty$ in
  the strong resolvent sense, which implies~\eqref{ga2}.

  Finally, if $V$ is as in Theorem~\ref{T3} and $H_V\geq 0$, let us
  set $V_n:=\max(-n,V)$. Then each $V_n$ is locally integrable and
  bounded from below with $H_{V_n}\geq 0$ and again everything follows
  from the above and monotone convergence of quadratic forms (this is
  also included in the proof of theorem 2.11 in~\cite{G2}).
\end{proof}

Now we are in the position to prove the main result of this paper:

\begin{Theorem}
  \label{T3} Let $M$ be geodesically complete, let
  $\left|V\right|\in\mathsf{L}^2_{\mathrm{loc}}(M)$ and assume that
  $V$ has a decomposition $V=V_1-V_2$ into potentials $V_j\geq 0$ with
  $\left|V_2\right|\in\mathcal{K}(M)$. Then
  $\Gamma_{\mathsf{C}^{\infty}_0}(M,E)$ is an operator core for $H_V$
  and one has
  \begin{align}
    \mathsf{D}(H_V)
    =\left.\Big\{f\right|f,(\nabla^{\dagger}\nabla+V)f
         \in\Gamma_{\mathsf{L}^{2}}(M,E)\Big\}.
    \label{dof}
  \end{align}
\end{Theorem}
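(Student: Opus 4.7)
The plan is to follow the chain of successive core refinements outlined in the introduction and to finish with Friedrichs mollifiers. First I would reduce to the case $H_V\geq 0$ by replacing $V_1$ with $V_1+c$ for a sufficiently large constant $c\geq 0$; this is permissible since $H_V$ is semibounded below by Theorem~\ref{T1}, the shift preserves $|V|\in\mathsf{L}^2_{\mathrm{loc}}(M)$, and the Kato hypothesis on $V_2$ is untouched.

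To establish the intermediate claim~\eqref{che}, that $\mathsf{D}(H_V)\cap\{f\colon\text{compactly supported}\}$ is an operator core, I would feed the finite-propagation-speed property of $\cos(t\sqrt{H_V})$ from Proposition~\ref{fps}(b) into the abstract Chernoff-type lemma (Lemma~\ref{chernoff1}) from the appendix, which is designed for exactly this deduction. Next, to reach~\eqref{sd1}, I would pick $f\in\mathsf{D}(H_V)$ with compact support, apply the smoothing property of Proposition~\ref{T4} so that $\e^{-sH_V}f\in\Gamma_{\mathsf{L}^\infty_{\mathrm{loc}}}(M,E)$ converges to $f$ in graph norm as $s\to 0$, and restore compact support by subordination: since
\[
\e^{-sH_V}=\frac{1}{2\sqrt{\pi s}}\int_{-\infty}^{\infty}\e^{-t^2/(4s)}\cos(t\sqrt{H_V})\,\mathrm{d}t
\]
in the functional calculus, truncating the integral at $|t|\leq T$ yields operators that preserve compact support (Proposition~\ref{fps}(b)) and, thanks to the Gaussian weight, differ from $\e^{-sH_V}$ by an $\mathsf{L}^2$-small tail. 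A diagonal choice $T=T(s)\to\infty$ as $s\to 0$ should produce approximants lying in $\mathsf{D}(H_V)\cap\Gamma_{\mathsf{L}^\infty_{\mathrm{loc}}}(M,E)\cap\{\text{compactly supported}\}$ that converge to $f$ in graph norm.

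For $f$ in the set~\eqref{sd1} I would next check $\nabla^\dagger\nabla f\in\Gamma_{\mathsf{L}^2}(M,E)$. The pointwise bound $|Vf|\leq|V||f|$, together with $f\in\Gamma_{\mathsf{L}^\infty_{\mathrm{loc}}}$ being compactly supported and $|V|\in\mathsf{L}^2_{\mathrm{loc}}(M)$, gives $Vf\in\Gamma_{\mathsf{L}^2}(M,E)$; the distributional identity $H_Vf=\tfrac{1}{2}\nabla^\dagger\nabla f+Vf$ (obtained by dualizing Lemma~\ref{hil} against $\Gamma_{\mathsf{C}^\infty_0}(M,E)$) then forces $\nabla^\dagger\nabla f=2(H_Vf-Vf)\in\Gamma_{\mathsf{L}^2}(M,E)$. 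Equipped with $f,\nabla^\dagger\nabla f,Vf\in\mathsf{L}^2$, the Friedrichs-mollifier lemma from the appendix — purely local, and applicable because $f$ is compactly supported and may be examined in charts — produces $f_n\in\Gamma_{\mathsf{C}^\infty_0}(M,E)$ with $f_n\to f$ and $\nabla^\dagger\nabla f_n\to\nabla^\dagger\nabla f$ in $\mathsf{L}^2$, and dominated convergence using a local uniform $\mathsf{L}^\infty$-bound on the $f_n$ and $|V|\in\mathsf{L}^2_{\mathrm{loc}}$ also gives $Vf_n\to Vf$ in $\mathsf{L}^2$. Thus $H_Vf_n\to H_Vf$, so $\Gamma_{\mathsf{C}^\infty_0}(M,E)$ is dense in~\eqref{sd1} in graph norm, hence a core for $H_V$. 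The domain identity~\eqref{dof} then follows: the inclusion $\subset$ from graph-norm approximation by $\Gamma_{\mathsf{C}^\infty_0}$ combined with $(\nabla^\dagger\nabla+V)f_n=2H_Vf_n\to 2H_Vf$, and $\supset$ from applying a truncation and the same Friedrichs-mollifier step to any $f$ for which $f$ and $(\nabla^\dagger\nabla+V)f$ lie in $\mathsf{L}^2$.

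I expect the main obstacle to be~\eqref{sd1}: the simultaneous preservation of compact support and the acquisition of $\mathsf{L}^\infty_{\mathrm{loc}}$ regularity. The heat semigroup smooths but has infinite propagation speed; the wave operator has finite propagation but is not smoothing. The subordination-plus-truncation construction must be controlled so that both the spatial truncation error and $\e^{-sH_V}f-f$ vanish \emph{in graph norm} along the joint limit, and one must simultaneously confirm that the truncated approximants genuinely remain in $\mathsf{D}(H_V)\cap\Gamma_{\mathsf{L}^\infty_{\mathrm{loc}}}(M,E)$ with compact support — a balance which is precisely what the subordination trick is engineered to deliver.
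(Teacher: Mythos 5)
Your overall architecture matches the paper's: Chernoff plus the Davies--Gaffney/finite-propagation result to get a core of compactly supported elements, then the heat-semigroup smoothing to get local boundedness, then Friedrichs mollifiers. But there is a genuine gap at exactly the step you flag as the main obstacle, and the subordination trick does not close it. Writing $\e^{-sH_V}$ as a Gaussian average of $\cos(t\sqrt{H_V})$ and truncating at $|t|\leq T$ does give compactly supported approximants, and the truncation error is small \emph{in} $\mathsf{L}^2$ --- but that is all it is. The truncated approximant equals $\e^{-sH_V}f$ minus a tail which is only known to be an $\mathsf{L}^2$ section, so there is no reason it inherits the $\mathsf{L}^\infty_{\mathrm{loc}}$ property of $\e^{-sH_V}f$ from Proposition~\ref{T4}: that proposition is proved via the Feynman--Kac formula specifically for the function $\lambda\mapsto\e^{-s\lambda}$ and does not transfer to the truncated spectral multiplier. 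Since the whole point of reaching the set~\eqref{sd1} is to have \emph{both} compact support and local boundedness simultaneously (the latter being what makes $Vf\in\mathsf{L}^2$ and what Proposition~\ref{an1} requires), the construction fails at its decisive point.

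The paper's resolution is simpler and avoids the wave operator here entirely: fix a cutoff $\chi\in\mathsf{C}^\infty_0(M)$ with $\chi\equiv 1$ on a neighbourhood of $\mathrm{supp}(f)$ and set $f_t:=\chi\,\e^{-tH_V}f$. This is compactly supported (support of $\chi$) and locally bounded (smooth compactly supported function times a locally bounded section). The price is a Leibniz-type rule --- step (I) of the paper's proof --- showing that $\chi g\in\mathsf{D}(H_V)$ for $g\in\mathsf{D}(H_V)$ with
\begin{equation*}
H_V(\chi g)=\chi H_V g-\nabla_{(\Id\chi)^\sharp}g-\tfrac12(\Delta\chi)g,
\end{equation*}
proved by a quadratic-form computation; this identity is also what yields the graph-norm convergence $f_t\to f$. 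Your proposal omits any such product rule, and without it (or a working substitute) the passage from~\eqref{che} to~\eqref{sd1} is not established. A smaller issue: for the inclusion $\supset$ in~\eqref{dof} you cannot apply the mollifier step to an arbitrary $f$ with $f,(\nabla^\dagger\nabla+V)f\in\mathsf{L}^2$, since such $f$ need not be locally bounded or compactly supported; the correct route is that once $\Gamma_{\mathsf{C}^\infty_0}(M,E)$ is a core one has $H_V=H_{V,\min}=H^*_{V,\min}$, and the domain of $H^*_{V,\min}$ is the stated set by the known result in~\cite{Br}.
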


\begin{proof}
  We have to prove that $\Gamma_{\mathsf{C}^{\infty}_0}(M,E)$ is dense
  in $\mathsf{D}(H_V)$ with respect to the graph norm
  $\left\|\bullet\right\|_{H_V}$. This will be proven in four steps:

  \begin{myenumerate}{\Roman}
  \item
    \label{t3.i}
    \emph{If $\chi \in \mathsf{C}^{\infty}_0(M)$ and $f \in
      \mathsf{D}(H_V)$, then $\chi f \in \mathsf{D}(H_V)$ and}
    \begin{equation}
      \label{eq:chi.f.dom}
      H_V (\chi f)
      = \chi H_V f -  \nabla_{(\Id \chi)^\sharp} f
      - \f{1}{2}(\Delta \chi) f.
    \end{equation}
    Here, $(\Id \chi)^\sharp$ denotes the vector field corresponding
    to the $1$-form $\Id \chi$ (with respect to the underyling
    Riemannian metric).

    \emph{Proof.} We first note that the Sobolev product rule
    \begin{align}
      \nabla (\chi f) = (\Id \chi)\otimes f + \chi \nabla
      f\label{prod}
    \end{align}
    (which is applicable in view of~\eqref{f1} and~\eqref{doy}) shows
    that $\chi f$ is in $\mathsf{D}(q_{H_V})$, so that in order to
    prove $\chi f \in \mathsf{D}(H_V)$, it is sufficient to construct
    a $u \in \Gamma_{\mathsf{L}^2}(M,E)$ such that
    \begin{equation}
      \label{eq:dom.h.v}
      q_{H_V}(\chi f, h) = \iprod u h
    \end{equation}
    for all $h \in \mathsf{D}(q_{H_V})$, where then $H_V(\chi f)$ is
    given by $u$. To this end, we calculate
    \begin{align*}
      &q_{H_V}(\chi f, h)\\
      &= \f{1}{2}\iprod{\nabla (\chi f)}{\nabla h} + \iprod {V(\chi f)} h\\
      &= \f{1}{2}\iprod{\nabla f}{\nabla(\chi h)} -
      \f{1}{2}\iprod{\nabla f} {(\Id \chi)\otimes h} +
      \f{1}{2}\iprod{(\Id \chi) \otimes f} {\nabla h}
      +  \iprod{Vf} {\chi h}\\
      &=\iprod{H_V f} {\chi h} - \iprod{\nabla_{(\Id \chi)^\sharp} f}
      h + \f{1}{2}\iprod{(\Id^{\dagger} \Id \chi) f} h,
    \end{align*}
    where we have used~\eqref{prod} in the second equality, and $f \in
    \mathsf{D}(H_V)$ together with an integration by parts formula
    (Lemma~8.8 in~\cite{Br}) and the Sobolev product rule
    \[
    \nabla^{\dagger}(\alpha\otimes f)= (\Id^{\dagger} \alpha) f
    -\nabla_{\alpha^\sharp} f
    \]
    for (sufficiently) smooth $1$-forms $\alpha$ in the third
    equality.  In particular, we found a candidate $u$
    in~\eqref{eq:dom.h.v} and it has the desired form as
    in~\eqref{eq:chi.f.dom}.  \qedpart 14

  \item
    \label{t3.ii}
    \emph{The space
      \[
      \mathsf{D}^0(H_V)
      :=\mathsf{D}(H_V)\cap
      \left.\Big\{f\right|\text{$f$ has a compact support}\Big\}
      \]
      is dense in $\mathsf{D}(H_V)$ with respect
      to $\norm[H_V] \bullet$.}
    
    \emph{Proof.} By adding a constant, we can assume that $H_V\geq
    0$. But then the result readily follows from combining
    Proposition~\ref{fps} with Lemma~\ref{chernoff1}. \qedpart 12

 \item
   \label{t3.iii}
   \emph{The space
     \[
     \mathsf{D}^0_{\infty,\mathrm{loc}}(H_V)
     :=\mathsf{D}^0(H_V)\cap \Gamma_{\mathsf{L}^{\infty}_{\mathrm{loc}}}(M,E)
     \]
     is dense in $\mathsf{D}^0(H_V)$ with respect
     to $\norm[H_V] \bullet$.}
   
   \emph{Proof.} Let $f \in\mathsf{D}^0(H_V)$ and take $r>0$, $y\in M$
   with $\mathrm{supp}(f)\subset \mathrm{K}_r(y)$. Furthermore, pick a
   $\chi\in\mathsf{C}^{\infty}_0(M)$ with $\chi=1$ in
   $\mathrm{K}_{r+1}(y)$ and set $f_t:=\chi \e^{-t H_V} f$ for
   any $t> 0$. Then Proposition~\ref{T4} implies $f_t\in
   \mathsf{D}^0_{\infty,\mathrm{loc}}(H_V)$ and clearly
   $\norm{f_t-f}\to 0$ as $t\to 0+$. Furthermore,~\itemref{t3.i}
   implies $H_V(\chi f)= H_V f$ and also
   \begin{align}
     H_V(f_t-f)
     = \ &\chi H_V  \e^{-t H_V} f- \nabla_{(\Id \chi)^\sharp}
          \e^{-t H_V} f -\f{1}{2} (\Delta\chi ) 
          \e^{-t H_V} f -\chi H_Vf\nn\\
         &+\nabla_{(\Id \chi)^\sharp} f +\f{1}{2} (\Delta\chi ) f. \nn
   \end{align}
   Now it is easily seen that $\norm{H_V(f_t-f)}\to 0$ as $t\to 0+$.
   \qedpart 34

 \item
   \label{t3.iv}
   \emph{$\Gamma_{\mathsf{C}^{\infty}_0}(M,E)$ is dense in
     $\mathsf{D}^0_{\infty,\mathrm{loc}}(H_V)$ with respect to
     $\left\|\bullet\right\|_{H_V}$ and one has
     \begin{align}
       \mathsf{D}(H_V)
       =\left.\Big\{f\right|f,(\nabla^{\dagger}\nabla+V)f
          \in\Gamma_{\mathsf{L}^{2}}(M,E)\Big\}.
          \label{defi}
     \end{align}
    }

    \emph{Proof.}  Let $f \in
    \mathsf{D}^0_{\infty,\mathrm{loc}}(H_V)$.  By Lemma~\ref{hil} and
    the self-adjointness of $H_V$ we have $H_V\subset H^*_{V,\min}$,
    but it is well-known that (see for example p.644 in~\cite{Br})
    \begin{align}
      \mathsf{D}(H^*_{V,\min})=\left.\Big\{f\right|f,(\nabla^{\dagger}\nabla+V)f\in\Gamma_{\mathsf{L}^{2}}(M,E)\Big\}.\nn
    \end{align}
    In particular $\mathsf{D}^0_{\infty,\mathrm{loc}}(H_V)\subset
    \mathsf{D}(H^*_{V,\min}) $ implies $w:=\nabla^{\dagger}\nabla f+
    Vf\in\Gamma_{\mathsf{L}^2}(M,E)$. As $f$ is locally bounded with a
    compact support, one also has $Vf\in\Gamma_{\mathsf{L}^2}(M,E)$,
    so that $\nabla^{\dagger}\nabla f=w-
    Vf\in\Gamma_{\mathsf{L}^2}(M,E)$.  But now the assertion follows
    directly from Proposition~\ref{an1}, which is in fact a local
    result (and which again heavily uses that $f$ is locally bounded
    with a compact support).

    Finally,~\eqref{defi} simply follows from the essential
    self-adjointness of $\tilde H_{V,\min}$, which follows
    from~\itemref{t3.ii} and the by now proven fact that
    $\Gamma_{\mathsf{C}^{\infty}_0}(M,E)$ is an operator core for
    $H_V$.\qedhere
  \end{myenumerate}
\end{proof}

We immediately get: 
\begin{Corollary}
  \label{ende}
  Theorem~\ref{dsk} holds, that is, under the assumptions of
  Theorem~\ref{T3}, the operator $\nabla^{\dagger}\nabla/2 +V$ is
  essentially self-adjoint on $\Gamma_{\mathsf{C}^{\infty}_0}(M,E)$,
  and its closure is semibounded from below.
\end{Corollary}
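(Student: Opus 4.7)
The corollary is essentially a direct consequence of what has already been established, so the plan is simply to assemble the pieces.

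First, I would recall that the operator $H_V$ produced by Theorem~\ref{T1} is, by construction, self-adjoint and semibounded from below, being the operator associated with the closed, densely defined, semibounded quadratic form $q_{H_0}+q_V$. Therefore the semiboundedness part of the statement requires no further work once one knows that the closure of $\nabla^{\dagger}\nabla/2+V$ on $\Gamma_{\mathsf{C}^{\infty}_0}(M,E)$ coincides with $H_V$.

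Next, I would invoke Lemma~\ref{hil}, which says that the minimal operator $H_{V,\min}=\overline{\tilde H_{V,\min}}$ satisfies $H_{V,\min}\subset H_V$, together with Theorem~\ref{T3}, which asserts that $\Gamma_{\mathsf{C}^{\infty}_0}(M,E)$ is an operator core for $H_V$. Combining these two statements: every $f\in\mathsf{D}(H_V)$ is the graph-norm limit of some sequence $(f_n)\subset\Gamma_{\mathsf{C}^{\infty}_0}(M,E)$, and on this sequence $H_V f_n=\tilde H_{V,\min} f_n$ by Lemma~\ref{hil}, which exhibits $f$ as an element of $\mathsf{D}(H_{V,\min})$ with $H_{V,\min}f=H_V f$. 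Hence $H_V\subset H_{V,\min}$, so together with the reverse inclusion $H_V=H_{V,\min}=\overline{\tilde H_{V,\min}}$.

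Since $H_V$ is self-adjoint, the identity $\overline{\tilde H_{V,\min}}=H_V$ is precisely the statement that $\tilde H_{V,\min}$ is essentially self-adjoint on $\Gamma_{\mathsf{C}^{\infty}_0}(M,E)$, and its closure inherits the semiboundedness of $H_V$ from Theorem~\ref{T1}. There is no genuine obstacle here; all of the heavy lifting — the path-integral smoothing in Proposition~\ref{T4}, the finite-propagation-speed argument in Proposition~\ref{fps}, and the Friedrichs mollifier step in the proof of Theorem~\ref{T3} — has already been done, and Corollary~\ref{ende} is essentially a one-line reformulation.
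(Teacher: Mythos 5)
Your proposal is correct and follows exactly the paper's argument: Lemma~\ref{hil} gives $H_{V,\min}\subset H_V$, the core property from Theorem~\ref{T3} gives the reverse inclusion, and self-adjointness plus semiboundedness of $H_V$ (from Theorem~\ref{T1}) then yield the claim. The paper states this in one line; you have merely spelled out the same reasoning.
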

\begin{proof}
  Combining Theorem~\ref{T3} with Lemma~\ref{hil} immeadiately gives
  $H_{V,\min}=H^*_{V,\min}=H_V$.
\end{proof}

%
\section{Application to Hydrogen type problems on Riemannian
  3-manifolds}
\label{wasser}
%

In this section, we shall explain a typical application of
Theorem~\ref{dsk}: The essential self-adjointness of nonrelativistic
Hamiltonians corresponding to Hydrogen type atoms, with the
electron\rq{}s spin is taken into account. To this end, let us first
explain what the analogues of the Coulomb potential and the Pauli
operator are in a general curved setting. Here, we are going to
follow~\cite{G6} closely.
\begin{quote}
  \emph{Throughout Section~\ref{wasser}, we will assume that $M$ is a
    smooth connected Riemannian $3$-manifold without boundary.}
\end{quote}
Firstly, we want to point out that \lq\lq{}nonparabolicity\rq\rq{} is
the appropriate setting that admits natural analogues of the Coulomb
potential:
\begin{Definition}
  The Riemannian manifold $M$ is called \emph{nonparabolic}, if one has
  \[
  \int^{\infty}_0 p(t,x,y) \Id t <\infty\>\>\text{ for some (any) $x,y
    \in M$ with $x\ne y$.}
  \]
  Then 
  \[
  G \colon M\times M\longrightarrow (0,\infty],\>\>G(x,y):=
  \int^{\infty}_0 p(t,x,y) \Id t
  \]
  is called the \emph{Coulomb potential} on $M$.
\end{Definition}

It should be noted that nonparabolicity always implies noncompactness.
The essential point for the interpretation of $G$ as the Coulomb
potential is that $M$ is nonparabolic, if and only if $M$ admits a
positive Green\rq{}s function, and then $G$ is the minimal positive
Green\rq{}s function (see~\cite{G6} and the references therein for
these facts). The following criterion can be easily deduced from
Theorem~\ref{si}:

\begin{Lemma}
  \label{pa}
  Assume that there is a $C>0$ such that for all $t>0$ one has
  \begin{equation}
    \sup_{x\in M} p(t,x,x)\leq Ct^{-3/2}. \label{assy}
  \end{equation}
  Then $M$ is nonparabolic and there is a $\tilde{C}>0$ with
  \begin{equation}
    G(x,y)\leq \f{\tilde{C}}{\Id(x,y)} \text{ for all $x,y\in M$.}
    \label{marki0}
  \end{equation}
\end{Lemma}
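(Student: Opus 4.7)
The plan is to apply the self-improvement result (Theorem~\ref{si}) to upgrade the on-diagonal bound~\eqref{assy} into a full off-diagonal Gaussian upper bound, and then integrate in $t$ to obtain both conclusions in one stroke.

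More precisely, since~\eqref{assy} holds for \emph{all} $t>0$ (i.e.\ with $t_0=\infty$), Theorem~\ref{si} applied with $\dim M=3$ yields constants $C_1,C_2>0$ such that
\begin{equation*}
   p(t,x,y)\leq \f{C_1}{t^{3/2}}\,\e^{-\Id(x,y)^2/(C_2 t)}
   \quad\text{for all $t>0$ and all $x,y\in M$.}
\end{equation*}
Integrating this bound in $t$ over $(0,\infty)$ will control $G(x,y)$.

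For $x\ne y$, set $r:=\Id(x,y)>0$ and compute
\begin{equation*}
   \int_0^\infty t^{-3/2}\,\e^{-r^2/(C_2 t)}\,\Id t
   \>=\> \f{\sqrt{C_2\,\pi}}{r}
\end{equation*}
by the elementary substitution $u=r^2/(C_2 t)$, which reduces the integral to $\Gamma(1/2)=\sqrt{\pi}$. This finite value shows simultaneously that $\int_0^\infty p(t,x,y)\,\Id t<\infty$, i.e.\ $M$ is nonparabolic, and that
\begin{equation*}
   G(x,y)\>\leq\> C_1\sqrt{C_2\,\pi}\,\>\f{1}{\Id(x,y)},
\end{equation*}
giving~\eqref{marki0} with $\tilde C := C_1\sqrt{C_2\,\pi}$.

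There is no real obstacle here: the only nontrivial ingredient is the self-improvement theorem, whose hypothesis is exactly~\eqref{assy} (valid for \emph{all} $t$, hence no issue with the cutoff $t_0$), and the remaining work is the standard Gaussian integral. I would only be a little careful to note explicitly that the Gaussian factor ensures absolute convergence of the integral at $t\to 0+$ (where $t^{-3/2}$ is not integrable on its own), while the $t^{-3/2}$ decay handles the large-$t$ regime.
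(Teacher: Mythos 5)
Your proof is correct and is exactly the argument the paper intends: the paper gives no explicit proof of Lemma~\ref{pa} beyond remarking that it ``can be easily deduced from Theorem~\ref{si}'', and your application of that theorem with $t_0=\infty$ followed by the substitution $u=\Id(x,y)^2/(C_2 t)$ (yielding $\Gamma(1/2)=\sqrt{\pi}$) fills in precisely the intended details. The computation and the resulting constant $\tilde C=C_1\sqrt{C_2\pi}$ check out.
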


Next, we will explain the natural analogues of the Pauli-operator in
our general setting. To this end, we give ourselves a
\emph{Pauli-Dirac structure} $(c,\nabla)$ on $M$ in the sense
of~\cite{G6}, that is, with a smooth Hermitian vector bundle $E \to M$
with $\mathrm{rank} E=2$,
\[
c\colon \mathrm{T}^*M\longrightarrow\End(E)
\]
is a Clifford multiplication\footnote{A Clifford multiplication $c$ is
  a morphism of smooth vector bundles such that for all $\alpha\in
  \Omega^1(M)$ one has
\begin{equation*}
c(\alpha)=-c(\alpha)^*,\>\>c(\alpha)^*c(\alpha)=\left|\alpha\right|^2. 
\end{equation*}
}and $\nabla$ is a Clifford connection\footnote{A Clifford connection is a Hermitian connection with the following property: for all $\alpha\in \Omega^1(M)$ and all
  $X\in
  \Gamma_{\mathrm{C}^{\infty}}(M,\mathrm{T}M),\psi\in\Gamma_{\mathrm{C}^{\infty}}(M,E)$
  one has
\[
\nabla_X(c(\alpha)\psi)=c(\nabla^{\mathrm{T} M}_X \alpha )\psi+
c(\alpha)\nabla_X \psi.
\]} with respect to $c$.

\begin{Remark}
  The existence of a Pauli-Dirac structure on $M$ is a topological
  restriction, namely, $M$ admits a Pauli-Dirac structure, if and only
  if $M$ is a $\mathrm{spin}^{\mathbb{C}}$ manifold. This fact has
  also been explained in~\cite{G6}.
\end{Remark}

The \emph{Pauli-Dirac operator} $\IDD(c,\nabla)$ with respect to
$(c,\nabla)$ is defined by
\[
\IDD(c,\nabla):=c\circ \nabla \colon
\Gamma_{\mathrm{C}^{\infty}_0}(M,E)\longrightarrow
\Gamma_{\mathrm{C}^{\infty}_0}(M,E),
\]
which is a linear first order differential operator with
$\IDD(c,\nabla)^{\dagger}=\IDD(c,\nabla)$. If $(e_j)$ is some smooth
local orthonormal frame for $\mathrm{T}M$, then one has
$\IDD(c,\nabla)=\sum_jc(e_j^*) \nabla_{e_j}$. Furthermore,
$\IDD(c,\nabla)^2$ is a generalized Laplacian on $M$ which is given by
the following Lichnerowicz formula:

\begin{Lemma}
  \label{lich}
  The differential form $\mathrm{tr} [\nabla^{2}]/
  \mathrm{i}\in\Omega^2(M)$ is real-valued and closed, and one has
  \begin{equation}
    \IDD(c,\nabla)^2= \nabla^{\dagger}\nabla +
\f{1}{4}\mathrm{scal}(\bullet)\mathbf{1} +  \f{1}{2}\sum_{i<j} \mathrm{tr}\left[
\nabla^{2}\right](e_i,e_j)c(e^*_i)c(e^*_j).\label{ggs}
  \end{equation}
\end{Lemma}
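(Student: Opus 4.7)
The plan is to treat the two claims separately using standard Clifford module techniques.

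For the first assertion, observe that since $\nabla$ is Hermitian, its curvature $\nabla^2 \in \Omega^2(M,\End(E))$ takes values in the skew-Hermitian endomorphisms of $E$; hence $\mathrm{tr}[\nabla^2]$ is pointwise purely imaginary, so $\mathrm{tr}[\nabla^2]/\mathrm{i}$ is a real $2$-form. Closedness follows from the second Bianchi identity $d^\nabla \nabla^2 = 0$ (where $d^\nabla$ denotes the covariant exterior derivative induced by $\nabla$ on $\End(E)$-valued forms), combined with the pointwise fact that $\mathrm{tr}\circ d^\nabla = d\circ \mathrm{tr}$ on $\End(E)$-valued forms, which holds because in any local trivialization the correction term $d^\nabla - d$ is a commutator with the connection $1$-form and the trace of a commutator vanishes.

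For the Lichnerowicz identity, I would fix $x_0 \in M$ and pick a local orthonormal frame $(e_j)$ of $\mathrm{T}M$ which is geodesic at $x_0$, so that $\nabla^{\mathrm{T}M}_{e_i}e_j$ vanishes there. Expanding
\begin{align*}
\IDD(c,\nabla)^2 = \sum_{i,j} c(e_i^*)\,\nabla_{e_i}\bigl(c(e_j^*)\nabla_{e_j}\bigr)
\end{align*}
and invoking the Clifford connection rule $\nabla_{e_i}(c(e_j^*)\psi) = c(\nabla^{\mathrm{T}M}_{e_i}e_j^*)\psi + c(e_j^*)\nabla_{e_i}\psi$, the first-order Christoffel contributions drop at $x_0$, leaving $\sum_{i,j} c(e_i^*)c(e_j^*)\nabla_{e_i}\nabla_{e_j}$. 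The Clifford relation $c(e_i^*)c(e_j^*) + c(e_j^*)c(e_i^*) = -2\delta_{ij}$ then collapses the diagonal part to $-\sum_i \nabla_{e_i}^2$, which at $x_0$ coincides with $\nabla^{\dagger}\nabla$, while the off-diagonal part yields $\sum_{i<j} c(e_i^*)c(e_j^*)\,\nabla^2(e_i,e_j)$.

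The main step is then the pointwise decomposition $\nabla^2(X,Y) = R^S(X,Y) + F^{E/S}(X,Y)\,\mathbf{1}$, where $R^S(X,Y) := \tfrac14 \sum_{k,l}\langle R^{\mathrm{T}M}(X,Y)e_k,e_l\rangle c(e_k^*)c(e_l^*)$ is the spinorial lift of the Riemannian curvature. The remainder $F^{E/S}$ is forced to be an $\mathrm{i}\IR$-valued scalar $2$-form: this is a consequence of the Clifford connection property combined with the fact that a rank-$2$ bundle in dimension $3$ is pointwise an irreducible Clifford module, so its Clifford commutant reduces to $\IC\cdot\mathbf{1}$. Plugging in the decomposition and running the classical Lichnerowicz calculation (first Bianchi identity for $R^{\mathrm{T}M}$ and the anticommutation relations of the $c(e_k^*)$) reduces $\sum_{i<j}c(e_i^*)c(e_j^*)R^S(e_i,e_j)$ to $\tfrac14\mathrm{scal}\cdot\mathbf{1}$. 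Finally, taking the pointwise trace in the decomposition (using that $R^S$ is traceless) yields $F^{E/S} = \tfrac12 \mathrm{tr}[\nabla^2]$, and substitution gives the stated formula. The hardest ingredient is precisely this identification of the twisting part with $\tfrac12\mathrm{tr}[\nabla^2]$; the remainder is routine Bochner--Weitzenb\"ock bookkeeping.
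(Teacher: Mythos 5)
Your argument is correct. Note that the paper itself contains no proof of Lemma~\ref{lich}: it is quoted from the reference \cite{G6} (``we are going to follow \cite{G6} closely''), so there is nothing to compare line by line. What you supply is the standard Berline--Getzler--Vergne-style derivation, and all the key points are in place: the skew-Hermiticity of $\nabla^2$ and the vanishing of the trace of a commutator give reality and closedness of $\mathrm{tr}[\nabla^2]/\mathrm{i}$; the geodesic-frame expansion with the Clifford connection axiom produces $\nabla^{\dagger}\nabla+\sum_{i<j}c(e_i^*)c(e_j^*)\nabla^2(e_i,e_j)$; and the decomposition $\nabla^2=R^S+F^{E/S}\mathbf 1$ is legitimate here because $\nabla^2(X,Y)-R^S(X,Y)$ commutes with the Clifford action and, for a rank-$2$ module over a $3$-manifold, the fiberwise Clifford commutant is $\IC\cdot\mathbf 1$ by Schur (the complexified Clifford algebra being $M_2(\IC)\oplus M_2(\IC)$ with $2$-dimensional irreducibles). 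Two small things worth making explicit if you write this up: the identity $[\nabla^2(X,Y),c(\alpha)]=c(R^{\mathrm{T}M}(X,Y)\alpha)=[R^S(X,Y),c(\alpha)]$ is the computation that justifies the decomposition, and the coefficient $\tfrac12$ in $F^{E/S}=\tfrac12\mathrm{tr}[\nabla^2]$ uses $\mathrm{rank}\,E=2$ together with the tracelessness of $R^S$ (for $k\neq l$ the products $c(e_k^*)c(e_l^*)$ are traceless, and the $k=l$ terms vanish by antisymmetry of the curvature). With those spelled out, the proof is complete.
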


The last lemma makes it plausible (see also
Remark~\ref{phf}~\itemref{phf.b} below) to call $\IPP(c,\nabla):=
\IDD(c,\nabla)^2$ the \emph{Pauli-Dirac} operator with respect to
$(c,\nabla)$.

Clearly, if one has~\eqref{assy}, then $G(\bullet,y)$ exists and is
locally square integrable for any $y\in M$, and for any such $y$ and
$\kappa\geq 0$ one can consider the operator
\[
\tilde{H}(c,\nabla;\kappa,y):=\IPP(c,\nabla)-\kappa G(\bullet,y)\mathbf{1} 
\]
in $\Gamma_{\mathsf{L}^{2}}(M,E)$ with domain of definition
$\Gamma_{\mathrm{C}^{\infty}_0}(M,E)$, which gives rise to a symmetric
operator. Let us furthermore define the smooth potential
\begin{align}
  V(c,\nabla)
  :=  \f{1}{4}\mathrm{scal}(\bullet) \mathbf{1} 
       + \frac 12 \sum_{i<j} 
           \mathrm{tr}\left[ \nabla^2\right](e_i,e_j)
           c(e^*_i)c(e^*_j).
     \end{align}
With these preparations, Theorem~\ref{dsk} has the following important consequence:

\begin{Theorem}
  \label{hau}
  Assume that $M$ is geodesically complete with~\eqref{assy} and that $V(c,\nabla)$ admits a decomposition  
  \[
  V(c,\nabla)=V_1(c,\nabla)-V_2(c,\nabla)
  \]
  into potentials $V_j(c,\nabla)\geq 0$
  with $\left|V_2(c,\nabla)\right|\in\mathcal{K}(M)$. Then for any $\kappa\geq 0$ and $y\in M$, the operator
  $\tilde{H}(c,\nabla;\kappa,y)$ is essentially self-adjoint and its
  closure $H(c,\nabla;\kappa,y)$ is semibounded from below.
\end{Theorem}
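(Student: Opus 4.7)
My plan is to reduce $\tilde H(c,\nabla;\kappa,y)$ to the form covered by Theorem~\ref{dsk}. By the Lichnerowicz formula (Lemma~\ref{lich}) we have $\IPP(c,\nabla) = \nabla^{\dagger}\nabla + V(c,\nabla)$, so
\[
  \tilde H(c,\nabla;\kappa,y)
  = 2 \bigl( \tfrac{1}{2} \nabla^{\dagger}\nabla + W \bigr),
  \qquad W := \tfrac{1}{2} V(c,\nabla) - \tfrac{\kappa}{2} G(\bullet,y) \mathbf{1}.
\]
Multiplication by the positive constant $2$ preserves essential self-adjointness on a given initial domain and semiboundedness from below of the closure, so it suffices to verify the hypotheses of Theorem~\ref{dsk} for $\tfrac{1}{2}\nabla^{\dagger}\nabla + W$.

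I would use the given nonnegative decomposition $V(c,\nabla)=V_1(c,\nabla)-V_2(c,\nabla)$ to set $W_1:=\tfrac{1}{2}V_1(c,\nabla)\ge 0$ and $W_2:=\tfrac{1}{2}V_2(c,\nabla)+\tfrac{\kappa}{2} G(\bullet,y)\mathbf{1}\ge 0$. Smoothness of $V(c,\nabla)$ makes its local square-integrability immediate. For the Coulomb term, Lemma~\ref{pa} gives $G(x,y)\le \tilde C/\Id(x,y)$; since $G(\bullet,y)$ is smooth off $y$ and the Riemannian volume is locally comparable to the Euclidean one near $y$, the function $\Id(\bullet,y)^{-2}$ is locally integrable in dimension $3$, whence $G(\bullet,y)\in\mathsf{L}^2_{\mathrm{loc}}(M)$ and $|W|\in\mathsf{L}^2_{\mathrm{loc}}(M)$.

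The main step---and the one I expect to be the core obstacle---is showing $G(\bullet,y)\in\mathcal{K}(M)$. I would verify the equivalent condition in~\eqref{ka} directly by exploiting the semigroup property of $p$. Writing $G(z,y)=\int_0^\infty p(u,z,y)\,\Id u$ and applying Fubini yields
\[
  \int_0^\tau\!\!\int_M p(s,x,z)\, G(z,y)\,\mathrm{vol}(\Id z)\,\Id s
  = \int_0^\tau v\, p(v,x,y)\,\Id v
      + \tau \int_\tau^\infty p(v,x,y)\,\Id v.
\]
The assumption~\eqref{assy}, combined with the standard bound $p(v,x,y)\le\sqrt{p(v,x,x)\,p(v,y,y)}$, gives $p(v,x,y)\le Cv^{-3/2}$ for all $v>0$ and $x,y\in M$. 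Each of the two integrals on the right-hand side is then bounded by $2C\sqrt{\tau}$ uniformly in $x$, so the left-hand side tends to $0$ as $\tau\to 0+$, proving $G(\bullet,y)\in\mathcal{K}(M)$. Together with the hypothesis $|V_2(c,\nabla)|\in\mathcal{K}(M)$, this gives $|W_2|\in\mathcal{K}(M)$.

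Theorem~\ref{dsk} then applies to $\tfrac{1}{2}\nabla^{\dagger}\nabla+W$, and rescaling by $2$ yields the asserted essential self-adjointness on $\Gamma_{\mathsf{C}^{\infty}_0}(M,E)$ and semiboundedness from below of the closure of $\tilde H(c,\nabla;\kappa,y)$. The one genuinely delicate point in the argument is the Kato-class membership of the Coulomb potential; both the dimension $\dim M=3$ and the on-diagonal heat kernel bound~\eqref{assy} are indispensable here, with the exponent $-3/2$ being exactly what produces convergence at rate $\sqrt{\tau}$.
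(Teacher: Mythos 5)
Your reduction to Theorem~\ref{dsk} (rescaling by $2$, splitting $W=W_1-W_2$ with $W_2=\tfrac12 V_2(c,\nabla)+\tfrac\kappa2 G(\bullet,y)\mathbf{1}$, and checking $|W|\in\mathsf{L}^2_{\mathrm{loc}}$ via smoothness of $V(c,\nabla)$ and the bound~\eqref{marki0}) matches the paper's strategy, and your argument is correct. Where you genuinely diverge is in proving $G(\bullet,y)\in\mathcal{K}(M)$: the paper simply combines~\eqref{marki0} with Proposition~\ref{dhj}~(a), i.e.\ it observes that $\tilde C/\Id(\bullet,y)$ lies in $\mathsf{L}^p(M)+\mathsf{L}^\infty(M)$ for some $3/2<p<3$ and invokes the general inclusion $\mathsf{L}^p(M)+\mathsf{L}^\infty(M)\subset\mathcal{K}(M)$ under~\eqref{assy}. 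You instead verify the Kato condition~\eqref{ka} directly from the Chapman--Kolmogorov identity, obtaining
\[
\int_0^\tau\!\!\int_M p(s,x,z)\,G(z,y)\,\mathrm{vol}(\Id z)\,\Id s=\int_0^\tau v\,p(v,x,y)\,\Id v+\tau\int_\tau^\infty p(v,x,y)\,\Id v\leq 4C\sqrt{\tau},
\]
using $p(v,x,y)\leq\sqrt{p(v,x,x)\,p(v,y,y)}\leq Cv^{-3/2}$, which is valid for all $v>0$ since~\eqref{assy} is assumed for all $t>0$. Your computation is correct and buys something the paper's route does not: it is self-contained (no appeal to the $\mathsf{L}^p$-criterion of~\cite{G1}, and the pointwise bound~\eqref{marki0} is then needed only for the $\mathsf{L}^2_{\mathrm{loc}}$ check), and it yields the explicit uniform rate $O(\sqrt\tau)$, exhibiting transparently why the exponent $-3/2$ in~\eqref{assy} is exactly critical. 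The paper's route is shorter and reuses machinery that also covers more general potentials in $\mathsf{L}^p_{\mathrm{u,loc}}$.
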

\begin{proof}
  
Using ~\eqref{marki0} and Proposition~\ref{dhj} a), one
  easily checks that Theorem~\ref{dsk} can be applied with
  \[
    V :=\f{1}{4}\mathrm{scal}(\bullet)\mathbf{1} 
         +  \f{1}{2}\sum_{i<j} \mathrm{tr}
            \left[ \nabla^{2}\right](e_i,e_j)c(e^*_i)c(e^*_j)
             -\kappa G(\bullet,y)\mathbf{1},
  \]
  which proves the claim.
\end{proof}

\begin{Remark}
  \label{phf}
  \begin{myenumerate}{\alph}
  \item 
    \label{phf.a} Let 
\begin{align}
  S(c,\nabla)
  :=& \int_M
  \Bigl| 
     \Bignorm{\frac 14 \mathrm{scal}(\bullet) \mathbf{1} 
       + \frac 12 \sum_{i<j} 
           \mathrm{tr}\left[ \nabla^2\right](e_i,e_j)
           c(e^*_i)c(e^*_j)
     }
  \Bigr|^2_x \mathrm{vol}(\Id x)\nn\\
&\in [0,\infty],\nn
\end{align}
where $\left|\left\|\bullet \right\|\right|_x$ stands for the
fiberwise Hilbert-Schmidt norm. Using $\left|\bullet \right|_x\leq
\left|\left\|\bullet \right\|\right|_x$ and Proposition~\ref{dhj} a),
one sees that the assumption on $V(c,\nabla)$ in Theorem~\ref{hau} is
obviously satisfied under~\eqref{assy}, if $S(c,\nabla)<\infty$. This
variant of Theorem~\ref{hau} has been deduced in~\cite{G6} with
completely different methods, namely, using results of~\cite{Br}
(which rely on pure PDE methods).

  \item 
    \label{phf.b}
    In the situation of Theorem~\ref{hau}, the operator
    $H(c,\nabla;\kappa,y)$ can be interpreted~\cite{G6} as the
    nonrelativistic Hamiltonian corresponding to an atom with one
    electron and a nucleus with $\sim \kappa$ protons, where the
    electron\rq{}s spin has been taken into account and the nucleus is
    considered as fixed in $y$ with respect to the electron.  Here, in
    view of Lemma~\ref{lich}, the underlying magnetic field is given
    by $\mathrm{tr} [\nabla^{2}]/ \mathrm{i}\in\Omega^2(M)$. In
    particular, the above mentioned assumption $S(c,\nabla)<\infty$ is reasonable from
    the physics point of view, for it corresponds in a certain sense
    to a \lq\lq{}finite magnetic self-energy\rq\rq{} (it is essential
    for this interpretation to take the Hilbert-Schmidt norm in the
    definition of $S(c,\nabla)$).
  \end{myenumerate}
\end{Remark}

\subsection*{Acknowledgements}

The first author (BG) is indebted to Ognjen Milatovic for many
discussions on essential self-adjointness in the past three years, in
particular, for bringing the reference~\cite{grumm} into our attention
(which helped us to remove an unnecessary assumption from the original
version of Theorem~\ref{dsk}). Both authors kindly acknowledge the
financial support given by the SFB~647 ``Space---Time---Matter'' at
the Humboldt University Berlin, where this work has been started.

\appendix
\section{Friedrichs mollifiers}
%

We record the following result on Friedrichs mollifiers here. Let $M$
be a smooth connected Riemannian manifold without boundary, $E\to M$ a
smooth Hermitian vector bundle, $\nabla$ a Hermitian covariant
derivative in $E$, and $V \colon M\to\End(E)$ a potential.

\begin{Proposition}
  \label{an1}
  Let $\left|V\right|\in\mathsf{L}^2_{\mathrm{loc}}(M)$ and assume
  that $f\in\Gamma_{ \mathsf{L}^{\infty}_{\mathrm{loc}}}(M,E)$ is
  compactly supported with $\nabla^{\dagger}\nabla f\in \Gamma_{
    \mathsf{L}^{2}_{\mathrm{loc}}}(M,E)$ in the sense of
  distributions. Then there is a sequence $(f_n)_{n\in\IN}\subset
  \Gamma_{\mathsf{C}^{\infty}_0}(M,E)$ such that
  \begin{align}
    &\lim_{n\to\infty}\left\|f_n-f\right\|=0,\nn\\
    &\lim_{n\to\infty}\left\|\nabla^{\dagger}\nabla f_n -\nabla^{\dagger}\nabla
f\right\|=0,\nn\\
    &\lim_{n\to\infty}\left\|V f_n -Vf\right\|=0.\nn
  \end{align} 
\end{Proposition}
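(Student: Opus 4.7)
The strategy is to combine interior elliptic regularity with a standard mollifier construction in local charts. The crucial first observation is that $\nabla^\dagger\nabla$ is a second-order differential operator on $E$ with smooth coefficients whose principal symbol is $|\xi|^2\mathbf{1}$, hence elliptic. From $f\in\Gamma_{\mathsf{L}^\infty_{\mathrm{loc}}}(M,E)\subset \Gamma_{\mathsf{L}^2_{\mathrm{loc}}}(M,E)$ and $\nabla^\dagger\nabla f\in\Gamma_{\mathsf{L}^2_{\mathrm{loc}}}(M,E)$ in the sense of distributions, interior elliptic regularity yields $f\in \Gamma_{\mathrm{H}^2_{\mathrm{loc}}}(M,E)$, i.e.\ the first and second weak covariant derivatives of $f$ are locally square integrable. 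This upgrade is what makes it legitimate to push the second-order operator through a mollifier, and is the main conceptual step.

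Next, I would localize. Pick a finite cover $U_1,\dots,U_N$ of $\mathrm{supp}(f)$ by relatively compact open sets that are simultaneously coordinate domains of $M$ and trivialization domains of $E$, with a subordinate smooth partition of unity $\{\chi_i\}$ satisfying $\sum_i\chi_i\equiv 1$ on a neighborhood of $\mathrm{supp}(f)$. Writing $f=\sum_i\chi_i f$ and using the Sobolev product rule, each $\chi_i f$ is compactly supported in $U_i$ and still in $\mathrm{H}^2$, so it suffices to approximate each piece separately. Under the chart and trivialization, $\chi_i f$ corresponds to a compactly supported $\mathrm{H}^2$-function $F_i\colon\mathbb{R}^m\to\mathbb{C}^{\mathrm{rank}\,E}$, and $\nabla^\dagger\nabla$ becomes a second-order linear operator $P_i$ with smooth coefficients.

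Convolving with a standard nonnegative mollifier $\eta_n$ whose support shrinks to the origin then yields $F_{i,n}:=\eta_n * F_i$, which for $n$ large is smooth, supported in a fixed neighborhood of $\mathrm{supp}(F_i)$, and converges to $F_i$ in $\mathrm{H}^2(\mathbb{R}^m)$. Pushing back to $M$ and summing produces $f_n\in\Gamma_{\mathsf{C}^\infty_0}(M,E)$; since all supports stay in a common compact set and the coefficients of $\nabla^\dagger\nabla$ in the chart are smooth (hence bounded on that compact set), the $\mathrm{H}^2$-convergence immediately gives $\|f_n-f\|\to 0$ and $\|\nabla^\dagger\nabla f_n-\nabla^\dagger\nabla f\|\to 0$.

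Finally, for the potential term the $\mathsf{L}^\infty$-contractivity $\|\eta_n * u\|_{\mathsf{L}^\infty}\leq \|u\|_{\mathsf{L}^\infty}$ of mollification shows that $\{f_n\}$ is uniformly bounded in $\mathsf{L}^\infty$ on a fixed compact set $K$ containing all its supports, while the Lebesgue differentiation theorem gives $f_n\to f$ almost everywhere. Consequently $|V(f_n-f)|^2\leq C|V|^2\,1_K$ almost everywhere, the right-hand side being integrable because $|V|\in\mathsf{L}^2_{\mathrm{loc}}(M)$, and dominated convergence yields $\|V f_n-V f\|\to 0$. The main obstacle in this whole plan is the initial elliptic-regularity upgrade to $\mathrm{H}^2$; without it one would be forced to develop a delicate Friedrichs-type commutator estimate for a second-order operator on merely $\mathsf{L}^\infty$ data, whereas with the $\mathrm{H}^2$ estimate in hand everything reduces to routine local Sobolev analysis plus a direct dominated-convergence argument for $V$.
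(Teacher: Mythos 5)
Your proposal is correct, and its overall skeleton --- partition of unity, mollification of the components in a chart/trivialization, and a dominated-convergence argument for the $V$-term using the uniform $\mathsf{L}^{\infty}$ bound $|f_r|\leq\|f\|_{\infty}$ together with $|V|\in\mathsf{L}^2_{\mathrm{loc}}$ --- is exactly that of the paper. Where you genuinely diverge is in how the convergence $\|\nabla^{\dagger}\nabla f_n-\nabla^{\dagger}\nabla f\|\to 0$ is justified. You first invoke interior elliptic regularity for the Bochner Laplacian to upgrade $f$ to $\Gamma_{\mathrm{H}^2_{\mathrm{loc}}}$, after which mollification converges in $\mathrm{H}^2$ and the smooth-coefficient operator can simply be applied as a bounded map $\mathrm{H}^2\to\mathsf{L}^2$ on sections supported in a fixed compact set. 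The paper instead stays at the level of the graph norm and cites a Friedrichs-mollifier result (the $\mathsf{L}^2_{\mathrm{loc}}$-version of Proposition~5.14 in~\cite{Br}), i.e.\ precisely the commutator-type estimate you describe as the alternative one would otherwise need. Your route buys a more self-contained and transparent argument at the cost of invoking the (standard, but nontrivial) interior elliptic regularity theorem; the paper's route keeps the proof short by outsourcing the analytic content to~\cite{Br}. The only small points worth making explicit in your write-up are (i) that the partition-of-unity step requires $\nabla^{\dagger}\nabla(\chi_i f)\in\Gamma_{\mathsf{L}^2}$, which uses $\nabla f\in\Gamma_{\mathsf{L}^2_{\mathrm{loc}}}$ and hence already presupposes (part of) the regularity upgrade, and (ii) that the fiber norms should be compared with the Euclidean norms of the components via an orthonormal frame, as the paper does; neither affects the validity of your argument.
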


\begin{Remark}
  Note that one indeed has $f\in\Gamma_{ \mathsf{L}^{2}}(M,E)$, which
  follows from $f\in\Gamma_{ \mathsf{L}^{\infty}_{\mathrm{loc}}}(M,E)$
  and the fact that $f$ has a compact support. Furthermore,
  $\nabla^{\dagger}\nabla f\in \Gamma_{ \mathsf{L}^{2}}(M,E)$ follows
  from $\nabla^{\dagger}\nabla f\in \Gamma_{
    \mathsf{L}^{2}_{\mathrm{loc}}}(M,E)$ and the fact that
  $\nabla^{\dagger}\nabla f$ has a compact support.
\end{Remark}

\begin{proof}[Proof of Proposition~\ref{an1}]
  Since most of the arguments should be well-known, we only sketch the
  proof. Let $m:=\dim M$ and let $d$ be the fiber dimension of $E$.
  Since $f$ is compactly supported, we can use a partition of unity
  argument to assume that $f$ is supported in a relatively compact coordinate domain
  $U\subset M$ (which is identified with an open subset of $\IR^m$) such that
  there is a smooth orthonormal frame for $E$ over $U$, and we denote
  the components of $f$ in this frame with $f^{(1)},\dots,f^{(d)}$.
  Now take some $0\leq j_r\in\mathsf{C}^{\infty}_0(\IR^m)$ with
  $j(z)=0$ for $|z|\geq 1$ and
  \[
  \int_{\IR^m}j(z)\Id z=1.
  \]
  For $r>0$ let $j_r\in\mathsf{C}^{\infty}_0(\IR^m)$ be given by
  $j_r(z)=r^{-m}j(r^{-1} z)$. Let $r>0$ be small enough in the
  following such that the functions
  \begin{align}
    x\longmapsto  \int_{\IR^m} j_r(x-y)f^{(i)}(y) \Id y,\>\>\>i=1,\dots
    d,
    \label{des}
  \end{align}
  define an element
  \[
  f_r\in\Gamma_{\mathsf{C}^{\infty}_0}(U,E)\subset
  \Gamma_{\mathsf{C}^{\infty}_0}(M,E).
  \]
  Since the sections $f_r-f$ and $\nabla^{\dagger}\nabla f_r
  -\nabla^{\dagger}\nabla f$ are compactly supported, the convergence
  \begin{align}
    \lim_{r\to 0+}\left\|f_r-f\right\|=0\label{coni}
  \end{align}
  follows from Lemma~5.13~(ii) in~\cite{Br}, and
  \[
  \lim_{r\to 0+}\left\|\nabla^{\dagger}\nabla f_r
    -\nabla^{\dagger}\nabla f\right\|=0
  \]
  follows from the $\mathsf{L}^2_{\mathrm{loc}}$-version of
  Proposition~5.14 in~\cite{Br}, which can be proven with analogous
  arguments. Note that so far we have only used that $f$ is locally
  square integrable with a compact support.

  The local boundedness assumption on $f$ comes into play as follows:
  Namely, this assumption combined with the compact support assumption
  implies that $f$ is actually \emph{bounded} and so~\eqref{des}
  implies
  \begin{align}
    \left|f_r(x)\right|_x\leq \left\|f\right\|_{\infty}\text{ for
      all $x$, $r$}.\label{des2}
  \end{align}
  Since (in view of~\eqref{coni}) we may assume that $f_r\to f$ a.e.
  in $M$, and since $f_r$ has a compact support, the required
  convergence
  \[
  \lim_{r\to 0+}\left\|V f_r -V f\right\|=0
  \]
  now follows from~\eqref{des2} and dominated convergence.
\end{proof}

%
\section{Finite speed of propagation }
%

The following lemma is usually referred to as Chernoff\rq{}s finite
speed of propagation method~\cite{chernoff}.  Let $M$ be a smooth
connected Riemannian manifold without boundary, and let $E\to M$ be a
smooth Hermitian vector bundle.

\begin{Lemma}
  \label{chernoff1}
  Let $S$ be a self-adjoint nonnegative operator in
  $\Gamma_{\mathsf{L}^{2}}(M,E)$. Assume furthermore that
  $\mathsf{D}^0(S)$, the compactly supported elements of
  $\mathsf{D}(S)$, are dense in $\Gamma_{\mathsf{L}^{2}}(M,E)$ and
  that for any $f\in \mathsf{D}^0(S)$ and any $t>0$, the section
  $\cos(t\sqrt{S})f$ has a compact support. Then $\mathsf{D}^0(S)$ is
  an operator core for $S$.
\end{Lemma}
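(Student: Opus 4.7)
The plan is to establish essential self-adjointness of the restriction $T := S\big|_{\mathsf{D}^0(S)}$, which is symmetric and nonnegative. Once this is shown, the closure $\overline{T}$ is a self-adjoint extension of $T$ contained in the self-adjoint operator $S$; since a self-adjoint operator admits no proper self-adjoint extensions, one has $\overline{T} = S$, i.e.\ $\mathsf{D}^0(S)$ is an operator core for $S$. By the standard criterion for nonnegative symmetric operators, essential self-adjointness of $T$ is equivalent to density of $\mathrm{ran}(T+I)$ in $\Gamma_{\mathsf{L}^2}(M,E)$. I would therefore fix $u \in \Gamma_{\mathsf{L}^2}(M,E)$ satisfying $\langle u, Tg\rangle = -\langle u, g\rangle$ for every $g \in \mathsf{D}^0(S)$, and argue $u = 0$.

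The engine is the wave group $\cos(t\sqrt{S})$ obtained from the spectral calculus of $S$. For each fixed $g \in \mathsf{D}^0(S)$ I would introduce
$$\psi_g(t) := \langle u, \cos(t\sqrt{S}) g\rangle,\qquad t \in \mathbb{R}.$$
Since $g \in \mathsf{D}(S) \subset \mathsf{D}(\sqrt{S})$, the spectral theorem yields $\psi_g \in C^2(\mathbb{R})$ with $\psi_g(0) = \langle u, g\rangle$, $\psi_g'(0) = 0$, and $\psi_g''(t) = -\langle u, S\cos(t\sqrt{S})g\rangle$.

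The hypotheses of the lemma now enter decisively. Setting $g_t := \cos(t\sqrt{S})g$, functional calculus ensures $g_t \in \mathsf{D}(S)$ (as $\cos(t\sqrt{S})$ commutes with $S$ and preserves its domain), and the finite-speed-of-propagation assumption provides that $g_t$ has compact support; hence $g_t \in \mathsf{D}^0(S) = \mathsf{D}(T)$ with $Sg_t = Tg_t$. Applying the orthogonality relation to $g_t$ therefore gives
$$\psi_g''(t) = -\langle u, Tg_t\rangle = \langle u, g_t\rangle = \psi_g(t),$$
whose unique solution with the specified initial data is $\psi_g(t) = \langle u, g\rangle \cosh t$.

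To conclude, $\cos(t\sqrt{S})$ is a contraction on $\Gamma_{\mathsf{L}^2}(M,E)$, so $|\psi_g(t)| \leq \|u\|\,\|g\|$ is bounded uniformly in $t \in \mathbb{R}$; the only way $\langle u, g\rangle \cosh t$ can remain bounded is $\langle u, g\rangle = 0$. Since $g \in \mathsf{D}^0(S)$ was arbitrary and $\mathsf{D}^0(S)$ is dense in $\Gamma_{\mathsf{L}^2}(M,E)$ by hypothesis, $u = 0$, as desired. The only mildly delicate point I foresee is the rigorous justification of the $C^2$-regularity of $\psi_g$ and the permissibility of differentiating under the inner product; both follow from strong continuity of the families $\sqrt{S}\sin(t\sqrt{S})$ on $\mathsf{D}(\sqrt{S})$ and $S\cos(t\sqrt{S})$ on $\mathsf{D}(S)$, which are immediate consequences of the spectral theorem.
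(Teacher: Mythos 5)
Your argument is correct and is precisely the classical Chernoff finite-speed-of-propagation argument that the paper invokes by reference (it only cites Theorem~3 of Grummt--Kolb rather than writing out a proof): reduce to essential self-adjointness of $S|_{\mathsf{D}^0(S)}$ via the range criterion for nonnegative symmetric operators, and kill any $u\perp\mathrm{ran}(T+I)$ by showing $\langle u,\cos(t\sqrt{S})g\rangle$ solves $\psi''=\psi$ yet stays bounded. The only point worth a half-line in a write-up is that the support hypothesis is stated for $t>0$, which suffices since the ODE on $[0,\infty)$ with $\psi'(0)=0$ already forces $\psi(t)=\langle u,g\rangle\cosh t$ there.
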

\begin{proof} 
  The proof is a straightforward generalisation of the proof of
  Theorem~3 in~\cite{grumm}.
\end{proof}

%
\section{Path ordered exponentials}
%

In the following lemma, we collect some known facts about path ordered
exponentials for the convenience of the reader:

\begin{Lemma}
  \label{poe} Let $\IHH$ be a finite dimensional Hilbert space, let
  $T\in (0,\infty]$ and let $F\in
  \mathsf{L}^1_{\mathrm{loc}}([0,T),\ILL(\IHH))$. Then the following
  assertions hold:
  \begin{myenumerate}{\alph}
  \item 
    There is a unique weak ($=\mathsf{AC}_{\mathrm{loc}}$) solution
    $Y:[0,T)\to \ILL(\IHH)$ of the ordinary initial value problem
    \begin{equation}
      \f{\Id}{\Id t} Y(t)=Y(t)F(t),\>\>Y(0)=\mathbf{1}.
    \end{equation}

  \item 
    For any $0\leq t<T$ one has
    \begin{equation}
      Y(t)=\mathbf{1}+\sum^{\infty}_{k=1}\int_{0\leq s_1\leq \dots\leq
        s_k\leq t} F(s_1)\dots F(s_k) \Id s_1\dots \Id
      s_k.\label{pato}
    \end{equation}

  \item 
    If $F(\bullet)$ is Hermitian a.e. in $[0,T)$ and if there exists a
    real-valued function $c\in\mathsf{L}^1_{\mathrm{loc}}[0,T)$ such
    that for all $v\in\IHH$ it holds that
    \[
    \left\langle F(\bullet) v, v\right\rangle_{\IHH}\leq c(\bullet)
    \left\|v\right\|^2_{\IHH}\>\>\text{ a.e. in $[0,T)$,}
    \]
    then one has
    \[
    \left\|Y(t)\right\|_{\IHH}\leq \e^{\int^t_0 c(s) \Id
      s}\>\>\text{ for all $0\leq t<T$.}
    \]
  \end{myenumerate}
\end{Lemma}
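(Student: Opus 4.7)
The plan is to establish (a) and (b) simultaneously by Picard iteration and to deduce (c) from an adjoint Gronwall argument.

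For (a) and (b), I would define Picard iterates $Y_0\equiv\mathbf{1}$ and $Y_{n+1}(t):=\mathbf{1}+\int_0^t Y_n(s)F(s)\,\Id s$. Induction and Fubini give
\[
Y_n(t)=\mathbf{1}+\sum_{k=1}^n\int_{0\leq s_1\leq\cdots\leq s_k\leq t}F(s_1)\cdots F(s_k)\,\Id s_1\cdots\Id s_k.
\]
Submultiplicativity of the operator norm on the finite-dimensional space $\ILL(\IHH)$ bounds the $k$-th summand in norm by $\frac{1}{k!}\bigl(\int_0^t\|F(s)\|\,\Id s\bigr)^k$. Hence on every compact subinterval $[0,\tau]\subset[0,T)$ the series converges absolutely and uniformly; its sum $Y$ is a uniform limit of indefinite integrals, so it lies in $\mathsf{AC}_{\mathrm{loc}}([0,T),\ILL(\IHH))$ and satisfies $Y(t)=\mathbf{1}+\int_0^t Y(s)F(s)\,\Id s$ after passing to the limit in the iteration (using dominated convergence on each compact subinterval). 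This yields (b) and the existence part of (a). For uniqueness, if $Y_1,Y_2$ are two $\mathsf{AC}_{\mathrm{loc}}$ solutions, then $\varphi(t):=\|Y_1(t)-Y_2(t)\|$ satisfies $\varphi(t)\leq\int_0^t\varphi(s)\|F(s)\|\,\Id s$, and Gronwall's inequality in integral form (with $\mathsf{L}^1_{\mathrm{loc}}$ weight $\|F\|$) forces $\varphi\equiv 0$.

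For (c), the point is that the ODE $Y'=YF$ is not directly amenable to a Hermitian form estimate on $Y$ itself, so I would pass to the adjoint. For fixed $v\in\IHH$, set $w(t):=Y(t)^*v$. Taking adjoints in the integral equation is legitimate because on a finite-dimensional Hilbert space the involution is continuous and commutes with the Bochner integral; this yields $w'(t)=F(t)^*Y(t)^*v=F(t)w(t)$ a.e., by the Hermiticity of $F$. Consequently
\[
\frac{\Id}{\Id t}\|w(t)\|_{\IHH}^2=2\,\mathrm{Re}\,\langle F(t)w(t),w(t)\rangle_{\IHH}=2\langle F(t)w(t),w(t)\rangle_{\IHH}\leq 2c(t)\|w(t)\|_{\IHH}^2
\]
almost everywhere, and Gronwall's inequality gives $\|Y(t)^*v\|_{\IHH}\leq \e^{\int_0^t c(s)\,\Id s}\|v\|_{\IHH}$. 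Taking the supremum over unit vectors $v$ and using $\|Y(t)\|_{\IHH}=\|Y(t)^*\|_{\IHH}$ in the Hilbert space $\IHH$ yields the asserted bound.

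The only genuine technical care concerns the $\mathsf{L}^1_{\mathrm{loc}}$ regularity of $F$: it forces every differential identity to be interpreted almost everywhere, and it requires Gronwall in its integral form both for uniqueness in (a) and for the energy estimate in (c). Apart from that, the argument is routine since $\IHH$ is finite-dimensional, so Bochner integration, adjoints, and submultiplicativity of the operator norm behave as in elementary matrix analysis; the Dyson series is just the Picard iteration unrolled, and the convergence is governed by the $1/k!$ factorial from integration over the simplex.
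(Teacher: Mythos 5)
Your proof is correct and complete; the paper itself gives no argument for this lemma, deferring entirely to \cite{dollard} and the appendix of \cite{G2}, and your Picard-iteration/Dyson-series construction for (a)--(b) together with the passage to the adjoint equation $w'=F^{*}w=Fw$ and a Gronwall estimate for (c) is exactly the standard argument contained in those sources. One cosmetic point: a uniform limit of absolutely continuous functions need not itself be absolutely continuous, so the $\mathsf{AC}_{\mathrm{loc}}$ regularity of $Y$ should be read off from the limiting integral identity $Y(t)=\mathbf{1}+\int_0^t Y(s)F(s)\,\Id s$ (which you do establish, and which exhibits $Y$ as an indefinite integral of an $\mathsf{L}^1_{\mathrm{loc}}$ function) rather than from the uniform convergence itself.
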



\begin{proof} 
  See~\cite{dollard} and the Appendix~C of~\cite{G2}.
\end{proof}



\end{document}